\documentclass[a4paper,11pt]{article}

\usepackage{draft}
\usepackage{amsmath}
\usepackage{amssymb}
\usepackage{amsthm}
\usepackage{mathtools}
\usepackage{mathrsfs}
\usepackage{booktabs}
\usepackage{tikz}
\usetikzlibrary{arrows.meta,calc}
\usepackage{microtype}
\usepackage{orcidlink}
\usepackage{cite}
\usepackage{xcolor}
\definecolor{twilightlavender}{rgb}{0.54, 0.29, 0.42}
\definecolor{richmaroon}{rgb}{0.69, 0.19, 0.38}
\definecolor{forestgreen(web)}{rgb}{0.13, 0.55, 0.13}
\definecolor{lava}{rgb}{0.81, 0.06, 0.13}
\hypersetup{
  breaklinks,
  colorlinks,
  citecolor=forestgreen(web),
  filecolor=richmaroon,
  linkcolor=lava,
  urlcolor=twilightlavender,
  pdftitle={Reflected constraint algebras of open Carroll--Weyl null strings},
  pdfauthor={Limin Zeng},
  pdfsubject={Open Carroll--Weyl null strings and reflected constraint algebras},
  pdfkeywords={Carrollian symmetry, open null strings, boundary conformal
    algebra, constrained Hamiltonian systems}
}

\newcommand{\CW}{\mathrm{CW}}
\newcommand{\dd}{\mathrm{d}}

\newcommand{\dchi}{\delta_\chi}
\newcommand{\leftcell}[1]{\parbox[t]{\linewidth}{\raggedright #1}}
\newtheorem{proposition}{Proposition}

\numberwithin{equation}{section}
\renewcommand{\theequation}{\arabic{section}.\arabic{equation}}

\title{Reflected constraint algebras of open Carroll--Weyl null strings}

\affiliation[a]{
  School of Fundamental Physics and Mathematical Sciences,\\
  Hangzhou Institute for Advanced Study, UCAS, Hangzhou 310024, China
}
\affiliation[b]{
  Institute of Theoretical Physics, Chinese Academy of Sciences,
  Beijing 100190, China
}
\affiliation[c]{
  University of Chinese Academy of Sciences, Beijing 100049, China
}

\author[a,b,c \orcidlink{0009-0003-5703-2687}]{Limin Zeng}
\emailAdd{zenglimin25@mails.ucas.ac.cn}

\abstract{We study whether open null strings admit the full Carroll--Weyl
gauge symmetry consistently with endpoint conditions and nontrivial physical
degrees of freedom.  For endpoints constrained to timelike defects, natural
variations imply that any nonzero Carroll flow must be tangent to the
worldsheet boundary.  For a fixed linear defect through the target origin, we
construct a differentiable Hamiltonian system on this branch.  Its complete
Carroll--Weyl generator requires no additional boundary charge.  Smooth
reflection defines a preserved subdomain and gives a circle interpretation to
selected sine and cosine modes of the constraints.  Their centreless algebra
extends the two-constraint open-string algebra by the local scale constraint.
An endpoint obstruction shows that this reflection condition
excludes some gauge orbits of the natural interval theory: we exhibit a
regular four-dimensional solution with no corresponding reflected
representative.  Within the reflected sector, we construct a local canonical
field pair and conserved target-space Lorentz charges.  A separate solution
family has nonzero boost charge and a nondegenerate two-dimensional symplectic
pullback.  The constraint algebra, the boundary choice, and the physical
observables thus provide complementary information about the open theory.}

\keywords{Carrollian symmetry, open null strings, boundary conformal algebra, constrained Hamiltonian systems}

\begin{document}
\frenchspacing
\maketitle

\section{Introduction}
\label{sec:introduction}

Tensionless strings provide a direct bridge from string dynamics to
particle-like Carrollian kinematics.  In the zero-tension limit, the
Lorentzian worldsheet is replaced by a Carrollian geometry
\cite{CarrollGeometry}.  Recent closed-string analyses show that gauging the
complete Carroll--Weyl (CW) flow is necessary for a consistent local description
\cite{Overlooked,CWI,CWII,WhatClassical}.  On an interval, bulk covariance
does not determine the phase-space domain, the differentiability of a
generator, or the preservation of boundary data.  These are Hamiltonian
questions.

Classical null strings were introduced by Schild \cite{Schild}.  The
Isberg--Lindstr\"om--Sundborg--Theodoridis (ILST) action realizes worldsheet
diffeomorphisms and common Weyl scaling \cite{ILST}.  Carrollian symmetries
are reviewed in \cite{CarrollReview,RuzziconiReview}.  The local
volume-preserving CW flow requires a connection and a third constraint
\cite{Overlooked,CWI,CWII}.  Earlier Hamiltonian conformal-string treatments
already contained local scale connections and a related constraint structure
\cite{GustafssonConformalString1995,LindstromTensionlessHistory2026}.  The
model below uses the CW system and does not impose the additional
\(X^2=0\) condition of that formulation.

Open null strings were constructed in the ungauged ILST description
\cite{OpenNull}.  Their boundary algebra is the boundary Carrollian conformal
algebra (BCCA) \cite{BCCAMath}.  Boundary Carrollian constructions have also
been extended to supersymmetric settings, yielding homogeneous and
inhomogeneous boundary superconformal algebras and an open null-superstring
realization of the homogeneous case \cite{Bagchi:2025BSCCA}.  Tensionless
transitions and oscillator gluing were studied with a Kalb--Ramond field
\cite{TensionlessOpen}.  The open CW problem is whether the third constraint,
the connection, and the auxiliary-field transformations admit one common
reflected interval domain.

Related studies extend the closed theory to curved and AdS targets
\cite{AdSCW,AdSNull,ConformalNull,CarrollTargets}, discuss holographic formulations
\cite{NullHolo,BRSTDescent,CWFlatHolo}, and analyze classical or quantum
anomalies or light-cone quantization
\cite{CWBRS,CWGhosts,LCQuant,CWAnomalies,QuantumCarrollString}.  Carroll
D\(_p\)-brane worldvolume constructions appear in
\cite{ZengDW,KlusonDBrane}, while null \(p\)-branes have been analyzed in a
light-cone Hamiltonian framework \cite{NullPBranes}.  Boundary-state methods
in conformal field theory
\cite{Cardy1989} and boundary degrees of freedom in gauge theory, gravity, and
string theory
\cite{DonnellyFreidel2016,DonnellyWong2017,BallLawWong2024,MaccaferriBoundaryModes2025}
provide related perspectives on boundary sectors.  None fixes the reflected
endpoint domain or the induced boundary algebra.

Our first result identifies the endpoint geometry required by a nonzero
massless flow on a timelike defect.  Natural tangential variations force the
Carroll direction to run along the worldsheet boundary; we call such an
endpoint characteristic.  For a fixed linear defect through the origin, the
resulting interval admits differentiable Hamiltonian generators and the
complete Carroll--Weyl transformation without an additional CW boundary
charge.

Our second result determines what smooth reflection selects.  It supplies a
preserved domain and a circle description of the constraint modes, while
excluding gauge orbits with a nonzero endpoint spatial gradient of the
tangent momentum direction.  A regular four-dimensional solution realizes
this obstruction explicitly.

Our third result exhibits the surviving physical content.  Solving the
constraints on a fixed-endpoint-angle layer gives one canonical field pair.
A separate boost family varies gauge-invariant endpoint directions and
carries a nonzero Lorentz charge canonically paired with its rapidity.

\paragraph{Setup.}
We work with the classical Carroll--Weyl null-string action in flat target
space, without additional endpoint interactions.  The Hamiltonian
construction fixes a nondegenerate linear defect through the origin and uses
the chart \(V^0\neq0\).  Smooth reflection is introduced in
Sec.~\ref{sec:algebra} as an additional choice of boundary data.

\begin{table}[ht]
\centering
\small
\begin{tabular}{@{}p{0.22\textwidth}p{0.31\textwidth}p{0.37\textwidth}@{}}
\hline
domain & additional condition & question answered\\
\hline
\(\widehat\Gamma_\partial\) &
\(X^I=P_I=V^1=0\) at both endpoints and \(V^0\neq0\) &
differentiability, tangency, and local time preservation\\
\(\mathcal R_R\) &
smooth periodic reflection with all endpoint jets and auxiliary parities &
circle projection and preservation of reflection\\
fixed-angle chart &
regular reflected layer with the three gauge conditions &
one local canonical field pair\\
boost family &
same reflection with varying endpoint momentum direction &
nonzero charge and two-dimensional symplectic pullback\\
\hline
\end{tabular}
\caption{Domains used in the paper.  The constraints and their gauge flows act
on these boundary domains.}
\label{tab:domains}
\end{table}

Sec.~\ref{sec:action} derives the natural endpoint conditions.
Sec.~\ref{sec:canonical} constructs the characteristic interval Hamiltonian
system.  Sec.~\ref{sec:algebra} develops the constraint modes and their
reflected realization.  Sec.~\ref{sec:regular} presents the reduced chart and
physical observables.  Appendices~\ref{app:half-interval}
and~\ref{app:closure} contain the mode and bracket details, while
Appendix~\ref{app:reflection-scope} proves the reflection-orbit obstruction.

\subsection*{Relation to prior constructions}

The open ILST theory has two bulk constraints and the BCCA boundary algebra
\cite{OpenNull,BCCAMath,NullReview}.  The closed CW theory adds a connection
and a third first-class constraint \cite{CWI,CWII}.  The open question is
whether both survive the endpoint.  We adapt the closed bulk brackets to the
interval and show that the \(D-3\) count survives on the regular constraint
surface.  The endpoint flux requires the phase-space domain and admissible
generators to be specified independently of bulk constraint closure.  The
orbit obstruction below shows that the fixed-defect reflected domain does not
cover all regular natural interval data.

The reflected sector is specified by a constant metric-preserving target
involution, compatible auxiliary fields, and reflection-compatible gauge
parameters.  Its selected mode algebra extends the two-constraint open ILST
sector by the CW scale current.  Table~\ref{tab:comparison} compares these
constraint structures and their regular local embedding-pair counts.
Appendix~\ref{app:central-extensions} separately studies candidate central
terms within a specified local diagonal-contact class, with the zero-mode
actions held fixed.

\begin{table*}[t]
\centering
\small
\begin{tabular}{@{}p{0.23\textwidth}p{0.19\textwidth}
p{0.22\textwidth}p{0.25\textwidth}@{}}
\hline
\leftcell{model} & \leftcell{third CW constraint} &
\leftcell{boundary algebra} & \leftcell{regular local embedding pairs}\tabularnewline
\hline
\leftcell{open ILST} & \leftcell{no} & \leftcell{BCCA} &
\leftcell{\(D-2\)}\tabularnewline
\leftcell{closed CW} & \leftcell{yes} & \leftcell{no boundary} &
\leftcell{\(D-3\)}\tabularnewline
\leftcell{this paper} & \leftcell{yes} &
\leftcell{selected constraint mode algebra} & \leftcell{\(D-3\)}\tabularnewline
\hline
\end{tabular}
\caption{Comparison of the known two-constraint open null string, the
closed CW theory, and the present open CW completion.  Here open ILST means
the ungauged two-constraint construction of Ref.~\cite{OpenNull}.  The
selected constraint mode algebra has both natural-interval and reflected
representations.  The local pair counts assume functional independence of
the constraints.}
\label{tab:comparison}
\end{table*}

\section{Endpoint conditions and the characteristic sector}
\label{sec:action}

\subsection{The gauged action and endpoint flux}

The field \(V^a\) specifies the Carroll flow on the worldsheet, and \(W_a\)
is the connection for local Carroll--Weyl scaling.  Our first goal is to
extract the endpoint term in the variation of the action.  This term
separates conditions required by the variational principle from the reflected
data selected later.

Let \(X^\mu\) embed the interval worldsheet into a flat \(D\)-dimensional
Minkowski target with metric \(\eta_{\mu\nu}\) of signature
\((D-1,1)\), and let the dimensionful normalization \(\kappa\) be nonzero.
It is the coefficient of the null-string action, not a restored finite
string tension.  Define
\begin{equation}
  D_aX^\mu=\partial_aX^\mu+W_aX^\mu.
  \label{eq:covariant}
\end{equation}
The transformations are
\begin{equation}
  \dchi X^\mu=\chi X^\mu,
  \qquad
  \dchi V^a=-\chi V^a,
  \qquad
  \dchi W_a=-\partial_a\chi.
  \label{eq:transformations}
\end{equation}
The Carroll field \(V^a\) is a contravariant worldsheet vector density of
weight \(1/2\).  Under a coordinate change it transforms as
\begin{equation}
  V'{}^a(\sigma')
  =
  |J|^{-1/2}
  \frac{\partial\sigma'^a}{\partial\sigma^b}V^b(\sigma),
  \qquad
  J=\det\frac{\partial\sigma'}{\partial\sigma}.
  \label{eq:carroll-density}
\end{equation}
This diffeomorphism-density weight is distinct from the CW weight minus one.
For a fixed-point reflection, \(|J|=1\) and the reflection parity of \(V^1\)
is unchanged.
The derivative of the parameter cancels between the coordinate and
connection transformations.  The resulting covariance identity,
\begin{equation}
  \dchi D_aX^\mu=\chi D_aX^\mu,
  \label{eq:covariant-transformation}
\end{equation}
is the standard closed CW identity \cite{CWI,CWII}.

The gauged action is
\begin{equation}
  S_{\CW}
  =
  \frac{\kappa}{2}
  \int_\Sigma\dd^2\sigma\,
  V^aV^bD_aX^\mu D_bX_\mu.
  \label{eq:action}
\end{equation}
The Carroll-vector factor has weight minus two, while each covariant
derivative has weight plus one.  The action integrand is therefore neutral
and
\begin{equation}
  \dchi S_{\CW}=0
  \label{eq:integrand-invariance}
\end{equation}
pointwise.  CW invariance does not require a cancelling boundary term on the
worldsheet interval.

The endpoint flux is
\begin{equation}
  F^a{}_\mu=\kappa V^aV^bD_bX_\mu.
  \label{eq:flux}
\end{equation}
For an embedding variation at fixed \(V\) and \(W\),
\begin{equation}
  \delta_XS_{\CW}
  =
  \kappa\int_\Sigma\dd^2\sigma\,
  V^aV^bD_a\delta X^\mu D_bX_\mu
  =
  \int_\Sigma\dd^2\sigma\,
  F^a{}_\mu D_a\delta X^\mu .
  \label{eq:x-variation-expanded}
\end{equation}
The density \(F^a{}_\mu\) has weight minus one.  Its covariant divergence is
\begin{equation}
  D^{(-1)}_aF^a
  =
  \partial_aF^a-W_aF^a .
  \label{eq:weighted-divergence}
\end{equation}
Therefore
\begin{equation}
  F^aD_a\delta X
  =
  \partial_a(F^a\delta X)
  -
  D^{(-1)}_aF^a\,\delta X .
  \label{eq:weighted-ibp}
\end{equation}
Integrating this identity gives
\begin{equation}
  \delta_XS_{\CW}
  =
  \delta_XS_{\rm bulk}
  +
  \kappa\int_{\partial\Sigma}\dd\tau\,
  n_aV^aV^bD_bX_\mu\delta X^\mu.
  \label{eq:boundary-variation}
\end{equation}
The normal Carroll contraction is retained.  It vanishes at a characteristic
endpoint.  Here \(n_aV^a\) is the normal component of the Carroll flow across
the worldsheet boundary.

\subsection{Natural endpoint conditions and nonzero-flow necessity}
\label{sec:boundaries}

\begin{figure}[tbp]
\centering
\resizebox{\linewidth}{!}{%
  \begin{tikzpicture}[x=1cm,y=1cm,font=\small,>=Stealth,
  flow/.style={-{Stealth[length=2.2mm]},line width=.8pt},
  guide/.style={densely dashed,line width=.45pt},
  boundary/.style={line width=.9pt}]
  \path[use as bounding box] (-.35,-.65) rectangle (14.85,6.25);
  \node[anchor=west,font=\bfseries\small] at (0,6.0)
    {(a) Worldsheet: characteristic endpoints};
  \node[anchor=west,font=\bfseries\small] at (9.5,6.0)
    {(b) Target-space inset};

  \fill[black!10] (1.1,.8) rectangle (1.53,5.05);
  \fill[black!10] (7.02,.8) rectangle (7.45,5.05);
  \node[rotate=90,font=\scriptsize] at (1.315,2.92)
    {image in the timelike defect \(T\)};
  \node[rotate=90,font=\scriptsize] at (7.235,2.92)
    {image in the timelike defect \(T\)};
  \draw[boundary] (1.65,.8) rectangle (6.9,5.05);
  \node[font=\scriptsize,anchor=south] at (1.65,5.13)
    {\(\partial\Sigma_{\mathrm L}\)};
  \node[font=\scriptsize,anchor=south] at (6.9,5.13)
    {\(\partial\Sigma_{\mathrm R}\)};
  \foreach \y in {1.25,2.45,3.65}{
    \draw[flow] (1.65,\y)--(1.65,\y+.7);
    \draw[flow] (6.9,\y)--(6.9,\y+.7);
    \draw[guide] (1.53,\y+.35)--(1.65,\y+.35);
    \draw[guide] (6.9,\y+.35)--(7.02,\y+.35);
  }

  \node[anchor=east,font=\scriptsize] at (6.65,4.45)
    {Carroll flow \(V^a\)};
  \node[font=\scriptsize,anchor=north] at (1.65,.53)
    {flow tangent to boundary};
  \node[font=\scriptsize,anchor=north] at (6.9,.53)
    {flow tangent to boundary};

  \draw[guide,-{Stealth[length=2mm]}]
    (4.85,1.12) .. controls (4.7,1.9) and (3.1,2.3) ..
    (2.5,3.1) .. controls (2.15,3.7) and (2.15,4.25) .. (2.15,4.85);
  \node[align=center,font=\scriptsize,fill=white,inner sep=2pt]
    at (5.05,2.75) {schematic interior\\Carroll flow};

  \draw[-{Stealth[length=2mm]},line width=.6pt] (.55,.7)--(.55,4.95);
  \node[rotate=90] at (.15,2.85) {Carroll time \(\tau\)};
  \draw[-{Stealth[length=2mm]},line width=.6pt] (1.7,-.1)--(6.85,-.1);
  \node[anchor=west] at (7.05,-.1) {\(\sigma\)};
  \node[font=\scriptsize] at (4.25,-.47) {\(I=[0,\pi]\)};

  \draw[boundary] (9.75,2.0)--(14.25,2.0);
  \node[anchor=west] at (14.33,2.0) {\(T\)};
  \fill (10.25,2.0) circle (1.25pt);
  \node[anchor=north] at (10.25,1.86) {origin};
  \fill (12.3,2.0) circle (1.7pt);
  \draw[flow] (12.3,2.0)--(12.3,4.1);
  \node[anchor=west] at (12.48,4.0) {\(N=T^\perp\)};
  \draw[line width=.45pt] (12.3,2.22)--(12.52,2.22)--(12.52,2.0);
  \draw[flow] (11.25,2.65)--(13.55,2.65);
  \node[font=\scriptsize,anchor=south] at (12.4,2.75)
    {tangent direction};
  \node[anchor=north,align=center] at (12.3,1.65)
    {endpoint\\in \(T\)};
  \node[align=center,font=\scriptsize,text width=4.7cm] at (12.1,-.05)
    {The normal components vanish;\\the tangent endpoint position may vary.};
  \node[font=\scriptsize,align=center,text width=5cm] at (12.1,5.0)
    {Linear timelike defect\\(schematic target-space section)};
\end{tikzpicture}%
}
\caption{Characteristic endpoints and the target defect.  (a) The boundary
images lie in a timelike defect \(T\).  On the characteristic branch, the
Carroll flow \(V^a\) is tangent to each worldsheet boundary,
\(n_aV^a=0\).  The dashed interior curve illustrates a flow approaching a
direction parallel to the boundary; the boundary arrows represent the
boundary flow itself.  (b) A schematic target-space section shows an endpoint
in the linear defect \(T\) and its normal space \(N=T^\perp\).  The normal
endpoint components vanish, while the tangential position may vary.}
\label{fig:endpoint-geometry}
\end{figure}

The action fixes the boundary flux, but it does not by itself choose a target
reflection.  We therefore first state the natural variation condition for a
defect and distinguish its worldsheet branches.  Let
\begin{equation}
  Y^\mu=V^aD_aX^\mu .
  \label{eq:Y-flow-boundary}
\end{equation}
For a tangent defect variation \(\delta X^\mu=e_i{}^\mu\delta\xi^i\), the
retained endpoint term in Eq.~\eqref{eq:boundary-variation} gives
\begin{equation}
  (n_aV^a)\,e_i{}^\mu Y_\mu\big|_{\partial\Sigma}=0 .
  \label{eq:natural-tangent-flux}
\end{equation}
In adapted nondegenerate tangent coordinates,
\begin{equation}
  (n\cdot V)Y^A\big|_{\partial\Sigma}=0 .
  \label{eq:natural-tangent-condition}
\end{equation}
An endpoint is called characteristic when \(n_aV^a=0\), so the Carroll flow
is tangent to the worldsheet boundary.  A non-characteristic endpoint has
\(n_aV^a\neq0\), so the flow crosses the boundary.
For ordinary Neumann directions, the tangential component of this condition
is the familiar open-string boundary condition
\cite{CallanLNY1987,Polchinski1995}.

At a non-characteristic endpoint, \(n\cdot V\neq0\), the natural condition
requires \(Y^A=0\).  The complete spatial gluing and higher endpoint
conditions remain separate data.  At a characteristic endpoint,
\(n\cdot V=0\), the normal boundary flux vanishes automatically.  A target
gluing and a smooth reflected extension are then independent inputs.  Adding
\(Y^A=0\) on the characteristic branch selects a smaller data class.  It is
not a consequence of the natural variation on that branch.

Fixed Dirichlet components obey \(\delta X^I=0\).  A fixed position at the
target origin preserves arbitrary endpoint CW parameters.  A nonzero fixed
position restricts the parameter or requires a compensated boundary variable.
That construction is separated from the present paper.

\refstepcounter{proposition}
\emph{Proposition \theproposition\ (Nonzero endpoint flow requires a
characteristic boundary).}\label{prop:noncharacteristic-zero-flow}
Consider a smooth solution near an open segment of a worldsheet boundary
mapped to a timelike target-space defect.  Assume natural tangential endpoint
variations and no additional boundary interactions.  If the boundary is
non-characteristic, then \(Y^\mu=V^aD_aX^\mu\) vanishes on that segment and
in the adjacent one-sided neighborhood reached by its flow.  In a chart with
\(V^0\neq0\), the momentum vanishes there as well.  Thus nonzero endpoint
flow requires a characteristic boundary.

\emph{Proof.}
At a non-characteristic endpoint the natural condition gives \(Y_T=0\), so
\(Y\) belongs to the normal space of the defect.  This space is positive
definite for a timelike defect.  The bulk constraint \(Y^2=0\) therefore
implies \(Y=0\).  The embedding equation is
\begin{equation}
  (\partial_a-W_a)(V^aY_\mu)=0 .
  \label{eq:Y-transport}
\end{equation}
Along the Carroll flow it is a homogeneous transport equation.  Since the
flow is transverse to a non-characteristic boundary, the zero boundary value
extends locally along the adjacent flow lines.  Finally,
\(P_\mu=\kappa V^0Y_\mu\) gives the momentum statement.

For the linear defects through the origin used below, the normal Dirichlet
data are
\begin{equation}
  X^I\big|_{\partial\Sigma}=0.
  \label{eq:origin}
\end{equation}
These data are preserved by local CW scaling, since
\(\delta_\chi X^I=\chi X^I\).  A nonzero fixed position requires the
compensated boundary construction analyzed in Supplementary Note 1 of
Ref.~\cite{ReflectedConstraintRepro}.

For a Neumann direction, the boundary derivative has weight zero:
\begin{equation}
  \dchi(V^aD_aX^A)
  =
  -\chi V^aD_aX^A+\chi V^aD_aX^A
  =
  0.
  \label{eq:neumann-covariance}
\end{equation}
Hence its zero set is CW invariant.  For a characteristic endpoint, the relevant
object is
\begin{equation}
  \dchi(n_aV^a)=-\chi n_aV^a .
  \label{eq:null-weight}
\end{equation}
The value zero is therefore also invariant under the full CW flow.

The natural endpoint analysis has selected the flux conditions.  We next
construct the Hamiltonian system on the characteristic interval using the
natural boundary conditions and their time preservation.  A smooth reflected
subdomain is selected and analyzed separately in Sec.~\ref{sec:algebra}.

\section{Hamiltonian consistency and Carroll--Weyl symmetry}
\label{sec:canonical}

We check three requirements: differentiability of the generators, tangency
of their Hamiltonian vector fields to the boundary domain, and preservation
of that domain under time evolution.  We use the standard Dirac--Bergmann
algorithm for constrained Hamiltonian and gauge systems
\cite{Dirac1964,HenneauxTeitelboim1992}.

\subsection{The boundary domain and its time preservation}

From this point onward, the endpoint defect is a fixed nondegenerate linear
subspace \(T\) through the target origin.  Let \(N=T^\perp\), and use \(A\)
for tangent indices and \(I\) for normal indices.  We do not assume a
reflected extension.  The worldsheet is characteristic at both endpoints,
\begin{equation}
  V^1(0)=V^1(\pi)=0,
  \label{eq:characteristic-endpoints}
\end{equation}
and we work in the local chart \(V^0\neq0\).

The embedding momentum is
\begin{equation}
  P_\mu=\kappa V^0V^aD_aX_\mu .
  \label{eq:momentum}
\end{equation}
The four auxiliary momenta are primary constraints:
\begin{equation}
  \Pi_a=\frac{\partial\mathcal L}{\partial\dot V^a}=0,
  \qquad
  Q^a=\frac{\partial\mathcal L}{\partial\dot W_a}=0 .
  \label{eq:primary-momenta}
\end{equation}
The canonical two-form includes all auxiliary pairs,
\begin{equation}
  \Omega_I
  =
  \int_I\dd\sigma\,
  \left(
  \delta P_\mu\wedge\delta X^\mu
  +\delta\Pi_a\wedge\delta V^a
  +\delta Q^a\wedge\delta W_a
  \right).
  \label{eq:extended-interval-symplectic}
\end{equation}
On smooth fields this form is weakly nondegenerate: an interior
compact-support argument sets every component to zero, and smoothness extends
the conclusion to the endpoints.

The fixed defect has \(X^I|_{\partial I}=0\).  The embedding flow derived below
has
\begin{equation}
  \dot X^I
  =
  2\alpha P^I+\beta X'{}^I+\lambda X^I .
  \label{eq:normal-endpoint-flow}
\end{equation}
At an endpoint, \(V^1=0\), so \(\beta=0\), and \(X^I=0\).  Since
\(\alpha\neq0\), preservation of \(X^I|_{\partial I}=0\) gives
\begin{equation}
  P_I|_{\partial I}=0 .
  \label{eq:normal-endpoint-momentum}
\end{equation}
Boundary conditions and their time-preservation constraints are treated in
the Dirac-constraint framework of Ref.~\cite{SheikhJabbariShirzad2001}.
The derivative of \(P_I\) does not impose \(P_I'=0\):
\begin{equation}
  \dot P_I
  =
  \beta P_I'+(\beta'-\lambda)P_I .
  \label{eq:normal-momentum-preservation}
\end{equation}
Thus the characteristic interval domain is
\begin{equation}
  \widehat\Gamma_\partial
  =
  \left\{
  (X,P,V,\Pi,W,Q)\in C^\infty(I):
  X^I|_{\partial I}=P_I|_{\partial I}=V^1|_{\partial I}=0,
  \quad V^0\neq0
  \right\}.
  \label{eq:natural-interval-domain}
\end{equation}
Equation~\eqref{eq:natural-interval-domain} specifies a boundary domain.  The
primary and secondary constraints, and any quotient by gauge flows, are
separate operations.  No endpoint condition is assigned to
\(W_1,\Pi_1\), or \(Q^1\) before a reflected extension is chosen.
The multiplier \(u^1\) must vanish at both endpoints because
\begin{equation}
  \dot V^1=u^1 .
  \label{eq:V1-multiplier-boundary}
\end{equation}

\subsection{Canonical constraints and differentiable generators}

Define the connection-independent constraint densities
\begin{equation}
  \mathcal H=P^2,
  \qquad
  \mathcal D=P\cdot X',
  \qquad
  \mathcal S=P\cdot X .
  \label{eq:interval-constraint-basis}
\end{equation}
They are related to the constraint basis used in the closed CW literature by
the invertible change
\begin{equation}
  C_1=\mathcal H,
  \qquad
  C_2=\mathcal D+W_1\mathcal S,
  \qquad
  C_3=\mathcal S .
  \label{eq:C-D-basis-change}
\end{equation}
Write \(F[f]=\int_I\dd\sigma\,fF\) and set
\begin{equation}
  \alpha=\frac{1}{2\kappa(V^0)^2},
  \qquad
  \beta=-\frac{V^1}{V^0},
  \qquad
  \lambda=\beta W_1-W_0 .
  \label{eq:natural-hamiltonian-coefficients}
\end{equation}
The Legendre-reduced Hamiltonian is
\begin{equation}
  H_c
  =
  \mathcal H[\alpha]+\mathcal D[\beta]+\mathcal S[\lambda].
  \label{eq:hamiltonian}
\end{equation}
The total Hamiltonian is
\(H_T=H_c+\Pi_a[u^a]+Q^a[v_a]\).

Primary stability gives
\begin{equation}
  \begin{aligned}
  \{Q^0,H_c\}&=\mathcal S,
  &
  \{Q^1,H_c\}&=-\beta\mathcal S,\\
  \{\Pi_1,H_c\}&=\frac{\mathcal D+W_1\mathcal S}{V^0},
  &
  \{\Pi_0,H_c\}
  &=
  \frac{\mathcal H}{\kappa(V^0)^3}
  -
  \frac{V^1}{(V^0)^2}
  (\mathcal D+W_1\mathcal S).
  \end{aligned}
  \label{eq:exact-C-stability}
\end{equation}
The independent secondary constraints are therefore
\begin{equation}
  \mathcal H\approx0,
  \qquad
  \mathcal D\approx0,
  \qquad
  \mathcal S\approx0 .
  \label{eq:constraints}
\end{equation}

For smooth \(f,h\), and smooth \(g\) vanishing at both endpoints, direct
functional differentiation gives
\begin{equation}
  \begin{aligned}
  \{\mathcal H[f],\mathcal H[k]\}&=0,
  &
  \{\mathcal H[f],\mathcal D[g]\}
  &=\mathcal H[fg'-f'g],\\
  \{\mathcal H[f],\mathcal S[h]\}
  &=-2\mathcal H[fh],
  &
  \{\mathcal D[g],\mathcal D[k]\}
  &=\mathcal D[gk'-g'k],\\
  \{\mathcal D[g],\mathcal S[h]\}
  &=\mathcal S[gh'],
  &
  \{\mathcal S[h],\mathcal S[\ell]\}&=0 .
  \end{aligned}
  \label{eq:natural-smeared-algebra}
\end{equation}
The endpoint condition on \(g\) is used in
\begin{equation}
  \{\mathcal H[f],\mathcal D[g]\}
  =
  [fgP^2]_{\partial I}
  +
  \mathcal H[fg'-f'g].
  \label{eq:HD-boundary-term}
\end{equation}
The allowed class \(g|_{\partial I}=0\) is closed under the derived smearings.
In the \(\mathcal H,\mathcal D,\mathcal S\) basis the secondary constraints
are independent of the primary variables.  The primary momenta commute, and
the secondary generators have tangent Hamiltonian vector fields
\begin{equation}
  (2fP^\sharp,0),
  \qquad
  (gX',(gP)'),
  \qquad
  (hX,-hP)
  \label{eq:secondary-Hamiltonian-flows}
\end{equation}
for \((\delta X,\delta P)\).

\begin{table}[ht]
\centering
\small
\begin{tabular}{@{}p{0.12\textwidth}p{0.20\textwidth}
p{0.31\textwidth}p{0.25\textwidth}@{}}
\hline
density & definition & flow on \((X,P)\) & allowed smearing\\
\hline
\(\mathcal H=C_1\) & \(P^2\) &
\(\delta X=2fP^\sharp,\ \delta P=0\) & any smooth \(f\)\\
\(\mathcal D\) & \(C_2-W_1C_3=P\cdot X'\) &
\(\delta X=gX',\ \delta P=(gP)'\) &
smooth \(g\) with \(g(0)=g(\pi)=0\)\\
\(\mathcal S=C_3\) & \(P\cdot X\) &
\(\delta X=hX,\ \delta P=-hP\) & any smooth \(h\)\\
\hline
\end{tabular}
\caption{Constraint generators on the natural characteristic interval.
The basis change \(C_2=\mathcal D+W_1\mathcal S\) keeps every connection term.
The complete CW transformation also acts on the auxiliary fields.}
\label{tab:natural-constraints}
\end{table}

The exceptional mixed bracket appears only in the \(C_2\) basis:
\begin{equation}
  \{Q^1[u],C_2[g]\}=-\mathcal S[ug].
  \label{eq:primary-secondary-bracket}
\end{equation}
It closes on the secondary set.  Thus the four primary and three secondary
constraints are weakly first class on the regular part.

The Hamiltonian flow of \(H_c\) is
\begin{equation}
  \dot X^\mu
  =
  2\alpha P^\mu+\beta X'{}^\mu+\lambda X^\mu,
  \qquad
  \dot P_\mu
  =
  (\beta P_\mu)'-\lambda P_\mu .
  \label{eq:full-Hamiltonian-flows}
\end{equation}
Direct differentiation along this flow gives the constraint propagation
equations
\begin{equation}\small
  \begin{aligned}
  \dot{\mathcal H}
  &=
  \beta\mathcal H'+2(\beta'-\lambda)\mathcal H,\\
  \dot{\mathcal S}
  &=
  (\beta\mathcal S)'+2\alpha\mathcal H,\\
  \dot{\mathcal D}
  &=
  \beta\mathcal D'+2\beta'\mathcal D
  +\alpha\mathcal H'+2\alpha'\mathcal H+\lambda'\mathcal S .
  \end{aligned}
  \label{eq:H-propagation}
\end{equation}
The propagation equations close homogeneously on
\((\mathcal H,\mathcal D,\mathcal S)\), so evolution preserves the constraint
surface.  At a characteristic endpoint, \(\beta=0\): the spatial transport
does not cross the boundary, and the endpoint term in the Hamiltonian
variation vanishes.  The same evolution preserves the defect data
\(X^I=P_I=0\).

The only spatial boundary term in its variation is
\begin{equation}
  \delta H_c|_{\partial I}
  =
  [\beta P_\mu\delta X^\mu]_{\partial I},
  \label{eq:hamiltonian-boundary-term}
\end{equation}
which vanishes on the characteristic interval.  The flow preserves the
endpoint values in \eqref{eq:natural-interval-domain}.  Along
\(\dd\sigma/\dd\tau=-\beta\), the equations are linear.  Since the endpoint
characteristics remain fixed, smooth evolution preserves the boundary domain
locally in time.

A well-defined Hamiltonian generator on a bounded interval must have no
uncontrolled surface term, following the classical treatments of boundary
variations in constrained systems
\cite{ReggeTeitelboim1974,HarlowWu2020}.

The complete CW generator is
\begin{equation}
  G[\chi]
  =
  \int_I\dd\sigma\,
  \left[
  \chi\mathcal S-\chi\Pi_aV^a-(\partial_a\chi)Q^a
  \right].
  \label{eq:differentiable-cw-generator}
\end{equation}
For an arbitrary smooth parameter \(\chi\), its derivatives are ultralocal and
no spatial integration by parts is required.  It acts as
\begin{equation}
  \begin{aligned}
  \delta_\chi X^\mu&=\chi X^\mu,
  &
  \delta_\chi P_\mu&=-\chi P_\mu,\\
  \delta_\chi V^a&=-\chi V^a,
  &
  \delta_\chi W_a&=-\partial_a\chi,
  &
  \delta_\chi\Pi_a&=\chi\Pi_a.
  \end{aligned}
  \label{eq:generator-actions}
\end{equation}
The generator preserves the endpoint values in
\eqref{eq:natural-interval-domain} and requires no additional CW boundary
charge.  In particular,
\(\delta_\chi\mathcal D=\chi'\mathcal S\) and
\(\delta_\chi W_1=-\chi'\) cancel in \(C_2\).  The primary momenta are weakly
preserved, and
\begin{equation}
  \{G[\chi],G[\eta]\}=0,
  \qquad
  \{G[\chi],C_i[f]\}\approx0 .
  \label{eq:differentiable-generator-algebra}
\end{equation}

\subsection{Summary of Hamiltonian consistency}

The preceding calculations establish the following result.

\begin{proposition}[The characteristic interval Hamiltonian system]
\label{thm:interval-hamiltonian-system}
On \(\widehat\Gamma_\partial\), take smooth multipliers with
\(u^1|_{\partial I}=0\) in \(H_T\).  Then \(H_T\) is differentiable and its
local Hamiltonian flow preserves the domain.  The four primary constraints
and \(\mathcal H,\mathcal D,\mathcal S\) are weakly first class.  For
arbitrary smooth \(f,h\) and smooth \(g\) with \(g|_{\partial I}=0\), the
functionals \(\mathcal H[f],\mathcal D[g],\mathcal S[h]\) are differentiable,
have tangent Hamiltonian vector fields, and form the closed algebra
\eqref{eq:natural-smeared-algebra}.  The generator \(G[\chi]\) is
differentiable for arbitrary smooth \(\chi\), including nonzero endpoint
values, and requires no additional CW boundary charge.
\end{proposition}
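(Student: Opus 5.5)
The plan is to verify, one at a time, the four properties the proposition asserts for each functional: differentiability, tangency of the Hamiltonian vector field to \(\widehat\Gamma_\partial\), preservation of the domain, and bracket closure. Every step is a local computation on \(I=[0,\pi]\). The only place where endpoint data enter is the boundary term that appears after spatial integration by parts. I therefore organise the proof around listing these boundary terms explicitly and showing that each one vanishes on \(\widehat\Gamma_\partial\).

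\textbf{Step 1 (differentiability).} I vary \(\mathcal H[f]=\int fP^2\) and \(\mathcal S[h]=\int hP\cdot X\). Neither contains derivatives of the fields, so no surface term arises and the variations give the flows \((2fP^\sharp,0)\) and \((hX,-hP)\). For \(\mathcal D[g]=\int gP\cdot X'\), one integration by parts produces \([gP_\mu\delta X^\mu]_{\partial I}\), which vanishes because \(g|_{\partial I}=0\). The same computation applied to \(H_c\) produces \([\beta P_\mu\delta X^\mu]_{\partial I}\), as in Eq.~\eqref{eq:hamiltonian-boundary-term}. This term vanishes since \(\beta=-V^1/V^0\) and \(V^1|_{\partial I}=0\). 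The terms \(\Pi_a[u^a]\) and \(Q^a[v_a]\) are ultralocal. For \(G[\chi]\) in Eq.~\eqref{eq:differentiable-cw-generator}, the term \((\partial_a\chi)Q^a\) multiplies the canonical variable \(Q^a\) itself rather than a derivative of it. Hence no integration by parts occurs and \(\chi|_{\partial I}\) is unrestricted.

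\textbf{Step 2 (tangency and domain preservation).} I evaluate each flow at the endpoints, using \(X^I=P_I=V^1=0\) there. For \(\mathcal H[f]\), \(\delta X^I=2fP^I=0\). For \(\mathcal D[g]\), \(\delta X^I=gX'^I\) and \(\delta P_I=g'P_I+gP_I'\) both vanish because \(g=0\) and \(P_I=0\). For \(\mathcal S[h]\) and \(G[\chi]\), the variations are proportional to \(X^I\), \(P_I\) and \(V^1\), so they vanish as well. For \(H_T\), Eqs.~\eqref{eq:normal-endpoint-flow} and~\eqref{eq:normal-momentum-preservation} with \(\beta=0\) give \(\dot X^I=\dot P_I=0\), and \(\dot V^1=u^1=0\) by the hypothesis on \(u^1\). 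The condition \(V^0\neq0\) is open, so it persists for short times. To pass from vanishing time derivatives to genuine local preservation, I will invoke the characteristic argument of Sec.~\ref{sec:canonical}. Since \(\beta=0\) at the endpoints, the transport \(\dd\sigma/\dd\tau=-\beta\) does not cross \(\partial I\), and the linear endpoint system for \((X^I,P_I)\) has the zero solution as its unique solution.

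\textbf{Step 3 (first-class property and algebra).} I check the brackets of Eq.~\eqref{eq:natural-smeared-algebra} by composing the flows. The only bracket with a potential surface term is \(\{\mathcal H[f],\mathcal D[g]\}\), shown in Eq.~\eqref{eq:HD-boundary-term}, and its term \([fgP^2]_{\partial I}\) vanishes by \(g|_{\partial I}=0\). I must also confirm that the derived smearings stay in the allowed class: \(gk'-g'k\) vanishes at the endpoints whenever \(g\) and \(k\) do, while \(fh\), \(gh'\) and \(fg'-f'g\) are smearings of \(\mathcal H\) or \(\mathcal S\), which carry no restriction. The primary stability relations~\eqref{eq:exact-C-stability}, the mixed bracket~\eqref{eq:primary-secondary-bracket}, and the propagation equations~\eqref{eq:H-propagation} then show that every bracket closes weakly on the constraint set. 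Finally I verify \(\{G,G\}=0\) and \(\{G,C_i\}\approx0\). The key cancellation is that \(\delta_\chi\mathcal D=\chi'\mathcal S\) is cancelled by \(\delta_\chi W_1\,\mathcal S\) inside \(C_2\).

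The main obstacle is the domain-preservation claim in Step 2. Vanishing endpoint time derivatives alone do not give local-in-time preservation, so I will lean on the characteristic structure at \(\beta=0\). A secondary delicate point is that \(W_1\), \(\Pi_1\) and \(Q^1\) carry no endpoint conditions. I must therefore confirm that no integration by parts involving these fields appears anywhere in the proof, which is consistent with their flows being ultralocal.
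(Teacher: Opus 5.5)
Your proposal is correct and follows essentially the same route as the paper, which proves the proposition by assembling the Sec.~3 computations: the surface terms \([gP_\mu\delta X^\mu]_{\partial I}\) and \([\beta P_\mu\delta X^\mu]_{\partial I}\) vanish, the flows are evaluated at the endpoints, the endpoint characteristics stay fixed where \(\beta=0\) (giving local-in-time preservation), the mixed \(Q^1\)--\(C_2\) bracket is checked, and \(G[\chi]\) is ultralocal. One small imprecision is that \(\{\mathcal H[f],\mathcal D[g]\}\) is not literally the only bracket that can produce a surface term, since evaluating \(\{\mathcal D[g],\mathcal S[h]\}\) through the \(\mathcal D[g]\)-flow gives \([gh\,\mathcal S]_{\partial I}\); this term also vanishes because \(g|_{\partial I}=0\).
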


\section{Constraint modes and their reflected realization}
\label{sec:algebra}

\subsection{Interval smearings and the O--P--A realization}

The interval algebra is already defined by the differentiable generators of
Sec.~\ref{sec:canonical}.  We first select trigonometric smearings and
compute their mode brackets.  We then impose smooth reflection and obtain
the circle-projection interpretation.  Finally, we test whether this
restriction is reachable by gauge transformations from arbitrary natural
interval data.

Define the interval modes directly from the connection-independent
constraints:
\begin{equation}\footnotesize
  O_n=-2\int_0^\pi\! \dd\sigma\,\sin(n\sigma)\mathcal D(\sigma),
  \qquad
  P_n=2\int_0^\pi\! \dd\sigma\,\cos(n\sigma)\mathcal H(\sigma),
  \qquad
  A_n=2\int_0^\pi\! \dd\sigma\,\cos(n\sigma)\mathcal S(\sigma).
  \label{eq:interval-generators}
\end{equation}
Here \(O_n\) are the sine modes of the spatial deformation constraint,
\(P_n\) are the cosine modes of the Hamiltonian constraint, and \(A_n\) are
the cosine modes of the local scale constraint.  The symbol \(P_n\) denotes a
constraint mode, not a Fourier coefficient \(p_{n\mu}\) of the momentum
field.
Substitution into Eq.~\eqref{eq:natural-smeared-algebra} gives
\begin{equation}
  \begin{aligned}
  [O_n,O_m]
  &=
  (n-m)O_{n+m}
  -(n+m)O_{n-m},\\
  [O_n,P_m]
  &=
  (n-m)P_{n+m}+(n+m)P_{n-m},\\
  [O_n,A_m]
  &=
  -mA_{n+m}+mA_{n-m},\\
  [P_n,A_m]
  &=
  -2(P_{n+m}+P_{n-m}),\\
  [P_n,P_m]
  &=
  [A_n,A_m]=0.
  \end{aligned}
  \label{eq:opa-algebra}
\end{equation}
The mode labels extend over the integers, with
\begin{equation}
  O_{-n}=-O_n,
  \qquad
  P_{-n}=P_n,
  \qquad
  A_{-n}=A_n,
  \qquad
  O_0=0.
  \label{eq:mode-reflection-rules}
\end{equation}
The O--P brackets reproduce the two-constraint BCCA sector
\cite{OpenNull,BCCAMath}.  The additional \(A_n\) modes generate local
rescalings,
\begin{equation}
  \delta_hX=hX,
  \qquad
  \delta_hP=-hP,
  \label{eq:A-local-scaling}
\end{equation}
and their mixed brackets describe the action of this scaling on the spatial
and null Hamiltonian constraints.  These trigonometric-polynomial smearings
form a closed subalgebra of the smooth interval generators.  Physical
target-space charges preserving the defect are constructed in
Sec.~\ref{sec:regular}.  The checks are reproduced in the accompanying
package \cite{ReflectedConstraintRepro}.

\subsection{Smooth reflected domains and circle normalization}

The fixed defect defines
\begin{equation}
  R=\Pi_T-\Pi_N,
  \qquad
  R^2=\mathbb{1},
  \qquad
  R^T\eta R=\eta .
  \label{eq:fixed-defect-reflection}
\end{equation}
A continuous reflection fixes endpoint values.  A smooth reflection also
matches every spatial derivative across the fold.  For even components the
odd endpoint derivatives vanish, while for odd components the even endpoint
derivatives vanish.  Thus a smooth reflected extension requires the parity
of every normal derivative.  For all \(j\geq0\),
\begin{equation}
  \begin{aligned}
  \Pi_N\partial_\sigma^{2j}X(b)&=0,
  &
  \Pi_T\partial_\sigma^{2j+1}X(b)&=0,\\
  \Pi_N\partial_\sigma^{2j}P^\sharp(b)&=0,
  &
  \Pi_T\partial_\sigma^{2j+1}P^\sharp(b)&=0,
  \end{aligned}
  \label{eq:reflected-endpoint-jets}
\end{equation}
at \(b=0,\pi\).  These conditions are equivalent to a smooth periodic
extension satisfying
\begin{equation}
  X(-\sigma)=RX(\sigma),
  \qquad
  P(-\sigma)=R^{-T}P(\sigma).
  \label{eq:reflected-field-data}
\end{equation}
\vspace{-0.65\baselineskip}
\begin{figure}[tbp]
\centering
\resizebox{\linewidth}{!}{%
  \begin{tikzpicture}[x=1cm,y=1cm,font=\small,>=Stealth]
  \path[use as bounding box] (-.25,-.25) rectangle (15.2,6.0);
  \node[anchor=west,font=\bfseries\small] at (0,5.7)
    {(a) Endpoint parities};
  \node[align=center,font=\bfseries\small] at (7.6,5.7)
    {(b) Fold and extend};
  \node[anchor=west,font=\bfseries\small] at (10.3,5.7)
    {(c) Parent circle};

  \draw[line width=.9pt] (.4,2.75)--(5.65,2.75);
  \fill (.4,2.75) circle (1.6pt);
  \fill (5.65,2.75) circle (1.6pt);
  \node[anchor=north,font=\scriptsize] at (.4,2.61)
    {\(b_{\mathrm L}\)};
  \node[anchor=north,font=\scriptsize] at (5.65,2.61)
    {\(b_{\mathrm R}\)};
  \node[font=\scriptsize] at (3.02,2.48)
    {spatial coordinate \(\sigma\)};

  \draw[line width=.8pt,domain=0:180,samples=101]
    plot ({.4+5.25*\x/180},{4.02+.56*cos(\x)});
  \node[font=\scriptsize,fill=white,inner sep=1pt] at (3.02,4.85)
    {even: \(X^A,\ P_A\)};
  \draw[line width=.7pt] (.15,4.58)--(.85,4.58);
  \draw[line width=.7pt] (5.25,3.46)--(5.9,3.46);
  \draw[densely dotted,line width=.45pt] (.4,2.85)--(.4,4.58);
  \draw[densely dotted,line width=.45pt] (5.65,2.85)--(5.65,3.46);

  \draw[densely dashed,line width=.8pt,domain=0:180,samples=101]
    plot ({.4+5.25*\x/180},{1.27+.55*sin(\x)});
  \draw[densely dotted,line width=.45pt] (.2,1.27)--(5.85,1.27);
  \fill (.4,1.27) circle (1.2pt);
  \fill (5.65,1.27) circle (1.2pt);
  \node[font=\scriptsize,fill=white,inner sep=1pt] at (3.02,.82)
    {odd: \(X^i,\ P_i\)};
  \node[font=\scriptsize,align=center] at (3.02,.2)
    {Parity is defined by reflection at each endpoint.};

  \draw[line width=.8pt] (6.65,2.75)--(8.95,2.75);
  \foreach \x in {6.65,8.95}{
    \draw[line width=.5pt] (\x,1.95)--(\x,4.35);
    \fill (\x,2.75) circle (1.6pt);
  }
  \draw[-{Stealth[open,length=2mm]},line width=.7pt]
    (6.65,3.15) .. controls (5.95,3.15) and (5.95,4.25) .. (6.65,4.25);
  \draw[-{Stealth[open,length=2mm]},line width=.7pt]
    (8.95,3.15) .. controls (9.65,3.15) and (9.65,4.25) .. (8.95,4.25);
  \node[font=\scriptsize,anchor=south] at (6.65,4.4)
    {fold at \(b_{\mathrm L}\)};
  \node[font=\scriptsize,anchor=south] at (8.95,4.4)
    {fold at \(b_{\mathrm R}\)};
  \node[font=\scriptsize,anchor=north east] at (6.53,2.55)
    {\(b_{\mathrm L}\)};
  \node[font=\scriptsize,anchor=north west] at (9.07,2.55)
    {\(b_{\mathrm R}\)};
  \node[align=center,font=\scriptsize,text width=2.9cm] at (7.8,1.05)
    {Reflect past each endpoint\\and extend periodically.};

  \draw[line width=.9pt] (13.95,3.05)
    arc[start angle=0,end angle=180,radius=1.55];
  \draw[densely dashed,line width=1pt]
    (10.85,3.05) arc[start angle=180,end angle=360,radius=1.55];
  \fill (10.85,3.05) circle (1.8pt);
  \fill (13.95,3.05) circle (1.8pt);
  \node[anchor=east,font=\scriptsize] at (10.7,3.05)
    {\(b_{\mathrm R}\)};
  \node[anchor=west,font=\scriptsize] at (14.1,3.05)
    {\(b_{\mathrm L}\)};
  \draw[-{Stealth[open,length=2.5mm]},line width=.7pt]
    (13.72,3.42) arc[start angle=16,end angle=72,radius=1.37];
  \node[align=center,font=\scriptsize] at (12.4,3.15)
    {smooth\\periodic extension};
  \node[font=\scriptsize,align=center] at (12.4,.95)
    {The interval endpoints remain\\two distinct fixed points.};
\end{tikzpicture}%
}
\caption{Endpoint reflection and the parent circle.  (a) Tangential
components \(X^A,P_A\) are even and normal components \(X^i,P_i\) are odd
under reflection about each endpoint.  The horizontal tangent marks indicate
vanishing first derivatives of the even components, while the odd components
have vanishing endpoint values.  Smooth reflection also requires all odd-order
endpoint derivatives of even components and all even-order endpoint
derivatives of odd components to vanish.  (b) Reflection across both
endpoints gives a smooth periodic extension.  (c) The interval endpoints
remain two distinct reflection fixed points.  With \(\sigma\) increasing
counterclockwise from the rightmost point of the circle,
\(b_{\mathrm L}\) at \(\sigma=0\) lies on the right and
\(b_{\mathrm R}\) at \(\sigma=\pi\) lies on the left.}
\label{fig:reflection-modes}
\end{figure}

The worldsheet connection and auxiliary data must also extend smoothly.  The
even fields are
\(V^0,\Pi_0,W_0,Q^0\), while the odd fields are
\(V^1,\Pi_1,W_1,Q^1\).  The multipliers \(u^0,v_0,\chi\) are even and
\(u^1,v_1\) are odd.  Consequently \(\alpha,\lambda\) are even and
\(\beta\) is odd.  Since \(V^1\) is odd,
\begin{equation}
  V^1(-\sigma)=-V^1(\sigma)
  \quad\Longrightarrow\quad
  V^1(b)=0 .
  \label{eq:fold-characteristic}
\end{equation}
The evolution equations preserve these parities term by term:
\(\alpha,\lambda\) are even and \(\beta\) is odd.  The auxiliary equations
and even CW parameters preserve the same reflected data.  Since \(V^1\) is
odd, both fixed points are characteristic endpoints.

The full second constraint is \(C_2=\mathcal D+W_1\mathcal S\).  The
reflection makes \(\mathcal D\) odd and \(\mathcal S\) even, while the
connection parity
\begin{equation}
  W_1(-\sigma)=-W_1(\sigma)
  \label{eq:C2-connection-parity}
\end{equation}
makes \(C_2\) odd.  Therefore
\begin{equation}
  C_1(-\sigma)=C_1(\sigma),
  \qquad
  C_2(-\sigma)=-C_2(\sigma),
  \qquad
  C_3(-\sigma)=C_3(\sigma).
  \label{eq:general-constraint-parities}
\end{equation}
Thus \(C_1\) and \(C_3\) require even smearings and \(C_2\) requires an odd
smearing on the reflected domain.

\begin{table}[ht]
\centering
\small
\begin{tabular}{@{}p{0.12\textwidth}p{0.24\textwidth}
p{0.10\textwidth}p{0.27\textwidth}p{0.15\textwidth}@{}}
\hline
density & relation & parity & reflected smearing & selected basis\\
\hline
\(\mathcal H=C_1\) & \(P^2\) & even &
smooth even periodic & \(\cos n\sigma\)\\
\(\mathcal D\) & \(C_2-W_1C_3\) & odd &
smooth odd periodic & \(\sin n\sigma\)\\
\(\mathcal S=C_3\) & \(P\cdot X\) & even &
smooth even periodic & \(\cos n\sigma\)\\
\(C_2\) & \(\mathcal D+W_1\mathcal S\) & odd &
smooth odd periodic & \(\sin n\sigma\)\\
\hline
\end{tabular}
\caption{Parities and selected modes on the smooth reflected domain.  The
connection \(W_1\) is odd.  General natural-interval smearings need not obey
these parity restrictions.}
\label{tab:reflected-constraints}
\end{table}

Common target choices are \(R=-\mathbb{1}\) for an origin Dirichlet sector,
\(R=\mathbb{1}\) for the all-tangent characteristic sector, and
\(R=\operatorname{diag}(1,1,-1,-1)\) for a four-dimensional D\(_1\) example.
The explicit reduction and boost family below use the D\(_2\) reflection
\(R=\operatorname{diag}(1,1,1,-1)\).  These choices share a mode table but
need not represent the same target defect.

On \(\mathcal R_R\), the reflected currents have the parent-circle
representation in the connection-independent basis.  Define
\begin{equation}\small
  L_n=-i\int_0^{2\pi}\dd\sigma\,e^{-in\sigma}\mathcal D(\sigma),
  \quad
  M_n=\int_0^{2\pi}\dd\sigma\,e^{-in\sigma}\mathcal H(\sigma),
  \quad
  S_n=\int_0^{2\pi}\dd\sigma\,e^{-in\sigma}\mathcal S(\sigma).
  \label{eq:parent-modes}
\end{equation}
The factor \(-i\) makes the parent algebra real.  The projected combinations
are
\begin{equation}
  O_n=\frac12(L_n-L_{-n}),
  \qquad
  P_n=\frac12(M_n+M_{-n}),
  \qquad
  A_n=\frac12(S_n+S_{-n}).
  \label{eq:generators}
\end{equation}
The parent brackets follow from the same
\((\mathcal H,\mathcal D,\mathcal S)\) algebra used on the interval:
\begin{equation}\small
  [L_m,L_n]=(m-n)L_{m+n},
  \;
  [L_m,M_n]=(m-n)M_{m+n},
  \;
  [L_m,S_n]=-nS_{m+n},
  \;
  [S_m,M_n]=2M_{m+n},
  \label{eq:parent-LL-LM}
\end{equation}
with \([M_m,M_n]=[S_m,S_n]=0\).  Reflection doubles the integral of the
canonical symplectic density,
\begin{equation}
  \iota^*\Omega_{S^1}=2\Omega_I .
  \label{eq:circle-symplectic-pullback}
\end{equation}
For reflection-invariant functionals whose Hamiltonian vector fields are
tangent to \(\mathcal R_R\), this gives
\begin{equation}
  \{\iota^*F,\iota^*G\}_I
  =
  2\{F,G\}_{\rm circle}\big|_{\mathcal R_R}.
  \label{eq:half-circle-bracket}
\end{equation}
For example, projection on the circle gives
\([P_n,A_m]=-(P_{n+m}+P_{n-m})\); the interval normalization doubles this
bracket.  The zero-mode relation \([P_0,A_0]=-4P_0\) provides the same check
directly.

\subsection{An endpoint obstruction to reflection}

Reflection makes the tangent momentum even, so its first spatial derivative
vanishes at each fold.  The endpoint bivector tests whether this stationarity
can be reached along a gauge orbit.  Define
\begin{equation}
  K_b^{AB}
  =
  \left(P^AP'^{B}-P^BP'^{A}\right)\big|_b .
  \label{eq:endpoint-tangent-bivector}
\end{equation}

\begin{proposition}[Gauge-orbit obstruction for a fixed defect]
\label{prop:fixed-defect-orbit-obstruction}
Fix the nondegenerate linear defect \(T\) and its reflection
\(R=\Pi_T-\Pi_N\).  On the constraint surface of
\(\widehat\Gamma_\partial\), the zero set of \(K_b\) is invariant under
smooth endpoint-preserving gauge flows generated by the admissible
constraints.  Every smooth reflected representative for this fixed defect has
\(K_b=0\).  Hence a gauge orbit with \(K_b\neq0\) cannot meet the
corresponding reflected domain.  Such orbits exist in a regular
four-dimensional D\(_2\) sector.
\end{proposition}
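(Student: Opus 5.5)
The proof has three parts: invariance of the zero set of \(K_b\) under each admissible generator, vanishing of \(K_b\) on reflected data, and an explicit D\(_2\) solution with \(K_b\neq0\).

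\emph{Invariance.} I will compute the variation of \(K_b\) under each generator of Table~\ref{tab:natural-constraints}, using the flows \eqref{eq:secondary-Hamiltonian-flows}. The Hamiltonian flow \(\mathcal H[f]\) has \(\delta P=0\), so \(\delta K_b=0\). The scale flow \(\mathcal S[h]\) has \(\delta P=-hP\), which gives \(\delta P'=-h'P-hP'\). Inserted into the antisymmetric combination, the \(h'P^AP^B\) terms cancel, so \(\delta K_b=-2h(b)K_b\).

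The spatial flow \(\mathcal D[g]\) with \(g(b)=0\) has \(\delta P=(gP)'\). At the endpoint this is \(\delta P|_b=g'(b)P|_b\). Its derivative is \(\delta P'|_b=g''(b)P+2g'(b)P'\), evaluated at \(b\). In \(K_b\) the \(g''P^AP^B\) term drops by antisymmetry, leaving \(\delta K_b=3g'(b)K_b\).

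The primary-constraint flows do not act on \(P\). The complete CW generator \(G[\chi]\) acts on \(P\) exactly like \(\mathcal S[\chi]\), so it also gives \(\delta K_b\propto K_b\). The time flow \eqref{eq:full-Hamiltonian-flows} has \(\beta(b)=0\) and is a combination of these generators, so it is covered too. Every variation is therefore \(\delta K_b=c(b)K_b\), a linear homogeneous ODE along the flow parameter. Uniqueness of its solution shows \(K_b=0\) is invariant, and nonzero values stay nonzero along finite smooth flows.

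One point needs care: this is a statement about \(\delta K_b\) written in terms of \(K_b\) itself, not merely a weak (on-shell) statement. The computation above is off-shell, so restricting to the constraint surface is harmless.

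\emph{Reflected representatives.} By \eqref{eq:reflected-endpoint-jets}, \(\Pi_T\partial_\sigma P^\sharp(b)=0\), so \(P'^A|_b=0\) and hence \(K_b^{AB}=0\). Combining this with the invariance result, an orbit containing a point with \(K_b\neq0\) never reaches \(K_b=0\). Such an orbit therefore cannot meet \(\mathcal R_R\).

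\emph{Explicit D\(_2\) example.} This is the main obstacle. I need a regular smooth solution in \(D=4\) with \(T\) spanned by \(x^0,x^1,x^2\) and \(N\) by \(x^3\). It must satisfy \(\mathcal H=\mathcal D=\mathcal S=0\) on \(\widehat\Gamma_\partial\), have \(X^3=P_3=0\) at both ends, and have \(P^A\) not parallel to \(P'^A\) at some endpoint.

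For the construction I would take \(P^\mu(\sigma)\) null with a rotating spatial direction, for instance \(P=p(\sigma)(1,\cos\theta(\sigma),\sin\theta(\sigma),0)\) with \(\theta'(0)\neq0\). This makes \(K_0\) proportional to \(p^2\theta'(0)\neq0\), and \(\mathcal H=0\) holds automatically. I then solve \(P\cdot X'=0\) and \(P\cdot X=0\) for \(X\). Setting \(X=\phi(\sigma)\,n(\sigma)\) with \(n\) a vector orthogonal to \(P\), the equation \(P\cdot X'=0\) reduces to \(\phi\,P\cdot n'=0\). A convenient choice is \(n=P/p\), which is null and orthogonal to \(P\). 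Then \(P\cdot n'=P\cdot(P/p)'=0\) identically, so \(X=\phi P/p\) works, with \(X^3=0\) everywhere.

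Regularity means the constraints are functionally independent. I would verify it by checking that the gradients of \(\mathcal H,\mathcal D,\mathcal S\) are independent at this point, adjusting \(X\) with an extra \(x^3\) bump vanishing at the ends if needed. Finally, I evolve with \(H_c\) (with \(V^1=0\) at the endpoints) to get a four-dimensional solution. The invariance step guarantees that \(K_b\neq0\) persists in time.
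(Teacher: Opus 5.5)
Your proposal is correct and is essentially the paper's proof: the same endpoint scaling law \(\delta K_b=(3g'(b)-2h(b))K_b\) (assembled generator by generator rather than from the combined flow \(\delta P=gP'+(g'-h)P\)), the same reflected jet condition \(P'^A(b)=0\), and with \(p=\rho\), \(\theta=\sigma\) and the \(x^3\) bump \(q\sin\sigma\) your family reproduces exactly the example \eqref{eq:orbit-obstruction-solution}. The one point to tighten is regularity: the gradient test reduces to \(P\wedge P'\neq0\) in the interior, so you should take \(\theta'\neq0\) on \((0,\pi)\) (not merely at \(\sigma=0\)), and the \(x^3\) bump plays no role in functional independence.
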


An admissible constraint flow has
\(\delta P=gP'+(g'-h)P\) with \(g(b)=0\).  The endpoint bivector therefore
scales as \(\delta K_b=(3g'(b)-2h(b))K_b\).  Along a finite smooth gauge path
this multiplies \(K_b\) by a nonzero factor, so its zero set is invariant.  A
reflected representative has \(P'^A(b)=0\), which gives \(K_b=0\).

In the timelike sector with \(P^0_b\neq0\), let
\(k^a=P^a/P^0\), where \(a\) runs over the spatial tangent directions of the
defect.  Then
\begin{equation}
  K_b^{0a}=(P^0_b)^2\,\partial_\sigma k^a|_b .
  \label{eq:endpoint-direction-gradient}
\end{equation}
This is the derivative of the defect tangent direction.  An admissible
reparametrization rescales it but cannot turn a nonzero value into zero.

For an explicit obstruction, take
\begin{equation}
  R=\operatorname{diag}(1,1,1,-1),
  \quad
  V=(1,0),
  \quad
  W=0,
  \label{eq:orbit-obstruction-background}
\end{equation}
\begin{equation}
  P^\mu=\rho(1,\cos\sigma,\sin\sigma,0),
  \qquad
  X^\mu
  =
  q\sin\sigma\,e_3^\mu
  +
  \frac{\tau+\tau_0}{\kappa}P^\mu ,
  \label{eq:orbit-obstruction-solution}
\end{equation}
with \(\rho,q,\kappa>0\) and \(\tau+\tau_0>0\).  It satisfies the constraints
and natural interval conditions, while
\(K_b^{12}=\rho^2\neq0\).  Component and regularity checks are given in
Appendix~\ref{app:reflection-scope}.

\subsection{Zero modes}

The canonically normalized interval algebra gives the zero modes
\begin{equation}
  O_0=0,
  \qquad
  P_0=2\pi C_{1,0},
  \qquad
  A_0=2\pi C_{3,0}.
  \label{eq:zero-modes}
\end{equation}
The interval algebra gives the zero-mode brackets
\begin{equation}
  \begin{aligned}
  [P_0,A_0]&=-4P_0,
  \qquad
  [P_0,A_m]=-4P_m,
  \qquad
  [A_0,P_m]=4P_m,\\
  [O_m,P_0]&=2mP_m,
  \qquad
  [O_m,A_0]=0.
  \end{aligned}
  \label{eq:zero-brackets}
\end{equation}

\section{Physical degrees of freedom and conserved charges}
\label{sec:regular}

A closed constraint algebra does not by itself establish nontrivial reduced
content.  We first solve the constraints at fixed endpoint angles and identify
the remaining canonical field pair.  We then vary the endpoint directions in
a boost family and compute its conserved charge and symplectic form.

\subsection{A regular local reduced chart}
\label{sec:reduction}

We construct a local canonical chart on a regular reflected layer with fixed
endpoint angles.  The three independent first-class embedding constraints
remove three canonical pairs, leaving
\begin{equation}
  N_{\rm emb}=D-3.
  \label{eq:count}
\end{equation}
The global zero-mode conditions are discussed in
Appendix~\ref{app:half-interval}.

The additional CW constraint removes one phase-space orbit associated with
local CW scaling.  It does not imply a codimension-one null target
hypersurface.  The constraints allow \(P^2=0\) with \(X^2\neq0\), as the
explicit solution below demonstrates.

We adapt the closed CW reduction \cite{CWI,CWII} to the interval chart
\(V^0\neq0\) and construct a nonempty regular representative of the
three-constraint surface.  Take four target dimensions with
\begin{equation}
  \eta=\operatorname{diag}(-1,1,1,1),
  \qquad
  R=\operatorname{diag}(1,1,1,-1).
  \label{eq:reduction-reflection}
\end{equation}
Directions \(0,1,2\) are even and direction \(3\) is odd.  For a
future-directed branch, fix constants \(\rho>0\) and \(0<a<\pi/2\), and let
\begin{equation}
  \theta_*(\sigma)=a\cos\sigma .
  \label{eq:reference-angle}
\end{equation}
We fix the reference time, the local momentum scale, and the spatial
parametrization by
\[
  X^0=0,
  \qquad
  P_0=-\rho,
  \qquad
  \theta(P)=\theta_*(\sigma)=a\cos\sigma .
\]
The endpoint angles are held at \(a\) and \(-a\).  At fixed time in the
connection gauge \(W_a=0\), the gauge variations generated by
\(\mathcal H[f]+\mathcal D[g]+\mathcal S[h]\) are
\begin{equation}
  \delta X=2fP+gX'+hX,
  \qquad
  \delta P=(gP)'-hP .
  \label{eq:reduction-gauge-variations}
\end{equation}
The variations of the three gauge conditions are \(+2\rho f\),
\(\rho(h-g')\), and \(g\theta_*'\).  On the fixed-endpoint-angle layer,
reflection makes the angular variation vanish quadratically at the endpoints,
so division by \(\theta_*'=-a\sin\sigma\) gives a smooth odd \(g\).  In a
sufficiently small \(C^2\) neighborhood of \(\theta_*\), with nondegenerate
quadratic endpoint extrema, the finite coordinate
\(\widetilde\sigma=\arccos[\theta(\sigma)/a]\) extends smoothly by
reflection and implements the same local parametrization.

On this slice write \(P_\mu=\rho(-1,\mathbf n)\) and
\(X^\mu=(0,\mathbf x)\), with \(|\mathbf n|=1\).  The scale and spatial
constraints give
\[
  \mathbf n\cdot\mathbf x=0,
  \qquad
  \mathbf n\cdot\mathbf x'=0 .
\]
Differentiating the first relation yields
\(\mathbf n'\cdot\mathbf x=0\).  Where \(\mathbf n'\neq0\), the embedding
therefore takes the form
\[
  \mathbf x=r(\mathbf n\times\mathbf n').
\]
The fields \(\varphi\) and \(r\) encode the local angular and radial
variations.
\begin{equation}\small
  \begin{alignedat}{2}
  \mathbf m&=\mathbf n\times\mathbf n',
  \qquad
  & X_0^\mu&=(0,r\mathbf m),\\
  P_\mu&=\rho(-1,\mathbf n),
  \qquad
  & \mathbf n(\sigma)
  &=
  \bigl(
  \cos\varphi\cos\theta_*,\,
  \cos\varphi\sin\theta_*,\,
  \sin\varphi
  \bigr).
  \end{alignedat}
  \label{eq:reduction-data}
\end{equation}
Here \(r\) is smooth, \(2\pi\)-periodic, and even, while \(\varphi\) is smooth,
\(2\pi\)-periodic, odd, and satisfies \(|\varphi|<\pi/2\).
Then \(P^2=0\) and \(P\cdot X_0=0\).  Since
\(\mathbf n\cdot\mathbf m=\mathbf n\cdot\mathbf m'=0\), also
\(P\cdot X_0'=0\).  Thus the three constraints hold pointwise while \(r\)
and \(\varphi\) retain local variations.

With \(R_s=\operatorname{diag}(1,1,-1)\), the stated parities give
\(\mathbf n(-\sigma)=R_s\mathbf n(\sigma)\) and
\(\mathbf m(-\sigma)=R_s\mathbf m(\sigma)\).  Hence \(X_0\) and \(P\) obey
the target gluing.

In the temporal branch, \(X=X_0+\tau P/\kappa\) preserves the constraints,
and \(V^1=0\) makes the endpoints characteristic.  At the reference time
\(\tau=0\), set \(X_0^0=0\).  Since
\begin{equation}
  |\mathbf n'|^2
  =
  (\theta_*')^2\cos^2\varphi+(\varphi')^2>0
  \label{eq:reduction-regularity}
\end{equation}
for \(0<\sigma<\pi\), \(P\) and \(P'\) are independent.  Regularity can also
be checked directly through the constraint gradients.  On the \(W_1=0\)
slice, let
\(F=\mathcal H[f]+\mathcal D[g]+\mathcal S[h]\).  If \(dF=0\), then in the
interior
\[
  \frac{\delta F}{\delta X^\mu}
  =
  -gP_\mu'+(h-g')P_\mu=0.
\]
Linear independence of \(P\) and \(P'\) gives \(g=h=0\).  The remaining
equation
\[
  \frac{\delta F}{\delta P_\mu}=2fP^\mu=0
\]
then gives \(f=0\), because \(P\neq0\).  Smoothness extends these equalities
to the endpoints, establishing functional independence on the reflected
family.

On the reduced surface the canonical one-form becomes
\begin{equation}
  \Theta_{\rm red}
  =
  -\int_0^\pi\dd\sigma\,
  \rho r\,\mathbf m\cdot\delta\mathbf n
  =
  \int_0^\pi\dd\sigma\,p_\varphi\delta\varphi,
  \label{eq:reduction-one-form}
\end{equation}
with
\begin{equation}
  p_\varphi=-\rho r\theta_*'\cos\varphi .
  \label{eq:reduction-momentum}
\end{equation}
Conversely,
\[
  r=\frac{p_\varphi}{\rho a\sin\sigma\cos\varphi}.
\]
For smooth odd \(p_\varphi\) and \(|\varphi|<\pi/2\), the quotient extends
smoothly and evenly through the endpoints.  Thus \((\varphi,p_\varphi)\)
reconstruct the embedding on this chart.
Therefore
\begin{equation}
  \Omega_{\rm red}
  =
  \int_0^\pi\dd\sigma\,\delta p_\varphi\wedge\delta\varphi .
  \label{eq:reduction-symplectic}
\end{equation}
Both \(p_\varphi\) and \(\varphi\) are odd under the reflection.  This patch
therefore realizes one canonical embedding pair on the reflected interval.
It is a local chart near the reference profile.  The component identities and
pullback are checked in the supporting reproducibility material.

\subsection{Lorentz symmetries preserving the chosen defect}
\label{sec:lorentz}

Let \(\omega\) be a constant target Lorentz generator,
\begin{equation}
  \omega^T\eta+\eta\omega=0,
  \qquad
  [\omega,R]=0 .
  \label{eq:target-defect-lorentz-generator}
\end{equation}
The canonical moment map is
\begin{equation}
  J[\omega]
  =
  \int_I\dd\sigma\,P_\mu\omega^\mu{}_\nu X^\nu .
  \label{eq:target-lorentz-moment-map}
\end{equation}
It generates
\(\delta_\omega X=\omega X\) and
\(\delta_\omega P=-\omega^TP\), with no spatial derivative and therefore no
boundary term.  The commutation condition with \(R\) preserves both
\(T\) and \(N\), so the generator preserves the natural interval domain and
its reflected subdomain.

The constraint densities are target Lorentz scalars,
\begin{equation}
  \{J[\omega],C_i[f]\}=0,
  \qquad
  \{J[\omega],G[\chi]\}=0,
  \qquad
  \{J[\omega],J[\nu]\}=J[[\omega,\nu]] .
  \label{eq:target-moment-map-algebra}
\end{equation}
The charges commute with the auxiliary primary constraints.  For
\(j_\omega=P\omega X\), the Hamiltonian equations give
\begin{equation}
  \dot j_\omega=(\beta j_\omega)'.
  \label{eq:target-moment-map-continuity}
\end{equation}
Since \(\beta\) vanishes at both endpoints, \(J[\omega]\) is conserved.
For a timelike D\(_p\) defect, the preserved target algebra is
\begin{equation}
  \mathfrak g_R
  \simeq
  \mathfrak{so}(1,p)\oplus\mathfrak{so}(D-p-1).
  \label{eq:defect-lorentz-algebra}
\end{equation}
Different target defects can therefore share the constraint mode table while
retaining different target symmetries.

\subsection{A symplectic boost family}
\label{sec:boost}

The preceding chart fixes the endpoint angles.  We next exhibit a separate
family whose boost parameter changes the endpoint momentum direction.  This
provides a direct test that the family contains physically distinct
configurations.  The family also carries a nonzero target-space charge, and
the induced symplectic form has a nondegenerate two-dimensional pullback.
Take the same four-dimensional mixed reflection and work in the representative
\(V^a=(1,0)\), \(W_a=0\).  Define
\begin{equation}
  \begin{split}
  q(\sigma)&=q_0+q_1\cos\sigma,
  \qquad
  q_0>|q_1|>0,\\
  P_\mu&=
  \rho\bigl(-\cosh\zeta,\sinh\zeta,\cos\sigma,\sin\sigma\bigr),\\
  X_0^\mu&=
  q(\sigma)\bigl(\sinh\zeta,\cosh\zeta,0,0\bigr),\\
  X^\mu&=X_0^\mu+\frac{\tau}{\kappa}P^\mu .
  \end{split}
  \label{eq:physical-sector-data}
\end{equation}
Here \(\rho>0\), \(\kappa>0\), and \(q_1\) are fixed, while \(q_0\) and the
boost rapidity \(\zeta\) vary.  Target indices on \(P\) are raised with
\(\eta^{\mu\nu}\).  This is a separate choice of representatives.

To establish the physical meaning of \(\zeta\), use the \(W_a=0\) gauge
variation at an endpoint,
\begin{equation}
  \delta P_\mu(b)=(g'(b)-h(b))P_\mu(b).
  \label{eq:endpoint-momentum-gauge}
\end{equation}
Since \(P^0(b)\neq0\), the normalized direction has
\begin{equation}
  \delta\left(\frac{P^1}{P^0}\right)_b=0.
  \label{eq:endpoint-direction-invariance}
\end{equation}
On this family the ratio is \(\tanh\zeta\).  Changing \(\zeta\) therefore
changes the solution independently of the admissible embedding-gauge orbit.
The endpoints have spatial momentum direction
\begin{equation}
  \mathbf n_0=(\tanh\zeta,\operatorname{sech}\zeta,0),
  \qquad
  \mathbf n_\pi=(\tanh\zeta,-\operatorname{sech}\zeta,0).
  \label{eq:physical-sector-endpoint-directions}
\end{equation}
Thus \(\zeta\) changes a gauge-invariant endpoint direction.

Let
\begin{equation}
  U^\mu=(\sinh\zeta,\cosh\zeta,0,0),
  \qquad
  V^\mu=(\cosh\zeta,\sinh\zeta,0,0).
  \label{eq:physical-sector-tangent-basis}
\end{equation}
Then
\begin{equation}
  P^2=0,
  \qquad
  P\cdot U=0,
  \qquad
  P\cdot V=-\rho .
  \label{eq:physical-sector-contractions}
\end{equation}
Since \(X_0=qU\) and \(X_0'=q'U\), the first two contractions imply
\(P\cdot X_0=P\cdot X_0'=0\).  The identity \(P\cdot P'=0\) then gives
\begin{equation}
  \mathcal H=P^2=0,
  \qquad
  \mathcal D=P\cdot X'=0,
  \qquad
  \mathcal S=P\cdot X=0
  \label{eq:physical-sector-constraints}
\end{equation}
at every time.  The momentum and evolution equations follow from
\begin{equation}
  P_\mu=\kappa\eta_{\mu\nu}\dot X^\nu,
  \qquad
  \dot P_\mu=0,
  \qquad
  \ddot X^\mu=0.
  \label{eq:physical-sector-equations}
\end{equation}
The endpoints are characteristic because \(V^1=0\).

The functions \(q\) and \(\cos\sigma\) are even, while \(\sin\sigma\) is odd,
so \(X\) and \(P\) extend smoothly under the target reflection.  Moreover,
\begin{equation}
  X^2=q(\sigma)^2>0,
  \qquad
  (P')^2=\rho^2>0,
  \qquad
  P'_\mu=\rho(0,0,-\sin\sigma,\cos\sigma).
  \label{eq:physical-sector-regularity}
\end{equation}
The time component of \(P\) is nonzero, while that of \(P'\) vanishes.
Hence \(P\) and \(P'\) are independent, and the same interior test-function
argument as above establishes functional independence of the three
constraints.

The boost in the \((0,1)\) plane commutes with \(R\) and preserves the
reflection domain.  Its charge is
\begin{equation}
  J^{01}
  =
  \int_0^\pi\dd\sigma\,
  \bigl(X^0P^1-X^1P^0\bigr).
  \label{eq:boost-charge}
\end{equation}
On this family,
\begin{equation}
  X^0P^1-X^1P^0=-\rho q(\sigma),
  \qquad
  J^{01}=-\pi\rho q_0\neq0.
  \label{eq:boost-charge-value}
\end{equation}
This boost commutes with \(R\), so its charge is gauge invariant and
conserved by the general argument in Sec.~\ref{sec:lorentz}.

At the reference time, the exterior derivative of
\eqref{eq:physical-sector-data} is
\begin{equation}
  dX_0^\mu
  =
  U^\mu\dd q_0+q(\sigma)V^\mu\dd\zeta .
  \label{eq:physical-sector-differential}
\end{equation}
Using Eq.~\eqref{eq:physical-sector-contractions}, the pullback of the
half-interval one-form is
\begin{equation}
  \Theta_{\mathcal S}
  =
  \int_0^\pi\dd\sigma\,P_\mu\dd X_0^\mu
  =
  -\pi\rho q_0\,\dd\zeta
  =
  J^{01}\dd\zeta .
  \label{eq:physical-sector-one-form}
\end{equation}
Consequently
\begin{equation}
  \Omega_{\mathcal S}
  =
  \dd J^{01}\wedge\dd\zeta
  =
  -\pi\rho\,\dd q_0\wedge\dd\zeta .
  \label{eq:physical-sector-symplectic}
\end{equation}
The two-form is nondegenerate on the \((q_0,\zeta)\) surface.  In the
half-interval convention used here, it induces
\begin{equation}
  \{\zeta,J^{01}\}_{\mathcal S}=1.
  \label{eq:physical-sector-bracket}
\end{equation}
The endpoint direction \(\tanh\zeta\) is invariant under admissible gauge
flows, while the conserved charge \(J^{01}=-\pi\rho q_0\) distinguishes
changes of \(q_0\).  The pullback
\(\dd J^{01}\wedge\dd\zeta\) shows that these two physical labels form a
canonical pair on the solution family.  Here the endpoint angles vary,
complementing the field reduction at fixed endpoint angles in
Sec.~\ref{sec:reduction}.  The component identities, charge, and symplectic
pullback are checked in the accompanying reproducibility material.

\section{Discussion and conclusion}
\label{sec:conclusion}

Natural endpoint variations tie the open Carroll--Weyl theory to the direction
of its massless flow.  For a timelike defect, a nonzero flow must be tangent
to the worldsheet boundary.  This condition removes the Hamiltonian endpoint
flux and supports the complete Carroll--Weyl generator without an additional
boundary charge.

Smooth reflection preserves this dynamics and gives a circle realization of
the constraint modes.  It also restricts the allowed boundary data: an
admissible gauge transformation cannot erase a nonzero endpoint spatial
gradient of the tangent momentum direction.  The explicit orbit obstruction
therefore distinguishes reflected strings from the larger natural interval
theory.

The reflected sector retains nontrivial physical variables.  The fixed-angle
reduction exhibits a canonical field pair, while the boost family has a
gauge-invariant endpoint direction, a nonzero conserved charge, and the
symplectic form \(\dd J^{01}\wedge\dd\zeta\).  The constraint algebra, the
boundary choice, and the physical observables thus provide complementary
information about the open theory.

The next steps are to determine which higher endpoint data obstruct
reflection, analyze rank-deficient zero-flow sectors, and incorporate boundary
dynamics for displaced defects; the affine endpoint construction is developed
in Supplementary Note 1 of Ref.~\cite{ReflectedConstraintRepro}.  Quantum
BRST consistency must also
account for the interval domain and its zero-mode conditions
\cite{CWBRS,QuantumCarrollString}.  Global zero modes and null-boundary charge
observables require a separate analysis \cite{SoftChargesNull2026}.

\acknowledgments
Limin Zeng would like to express his deepest and foremost thanks to his
partner Lu Lu.

\appendix
\renewcommand{\theequation}{\Alph{section}.\arabic{equation}}
\section{Half-interval mode projection}
\label{app:half-interval}

\paragraph{Notation.}
Endpoint labels are \(b\in\{0,\pi\}\), and a prime denotes
\(\partial_\sigma\).  Uppercase \(A\) labels directions tangent to the fixed
defect \(T\), while uppercase \(I\) labels directions in \(N=T^\perp\).
The symbol \(P^\sharp\) denotes the vector with components
\(P^\mu=\eta^{\mu\nu}P_\nu\).  The connection-independent basis is
\(\mathcal H=P^2\),
\(\mathcal D=P\cdot X'\), and \(\mathcal S=P\cdot X\), related to the
covariant basis by \(C_1=\mathcal H\),
\(C_2=\mathcal D+W_1\mathcal S\), and \(C_3=\mathcal S\).  Lowercase
\(p_{n\mu}\) are Fourier coefficients of the momentum field; uppercase
\(P_n\) are modes of the Hamiltonian constraint.

For an origin Dirichlet component, the odd reflection condition gives
\begin{equation}
  X^\mu(\sigma)=\sum_{n\geq1}x_n^\mu\sin(n\sigma),
  \qquad
  P_\mu(\sigma)=\sum_{n\geq1}p_{n\mu}\sin(n\sigma).
  \label{eq:appendix-sine-modes}
\end{equation}
The half-interval pairing is
\begin{equation}
  \int_0^\pi\dd\sigma\,\sin(m\sigma)\sin(n\sigma)
  =
  \frac\pi2\delta_{mn}.
  \label{eq:appendix-sine-orthogonality}
\end{equation}
Therefore the nonzero symplectic pairing joins the same positive mode
number.  It does not pair \(n\) with \(-n\).

For an even-reflection direction, the cosine expansion is
\begin{equation}
  X^\mu(\sigma)=x_0^\mu+\sum_{n\geq1}x_n^\mu\cos(n\sigma),
  \qquad
  P_\mu(\sigma)=p_{0\mu}+\sum_{n\geq1}p_{n\mu}\cos(n\sigma).
  \label{eq:appendix-cosine-modes}
\end{equation}
The zero-mode normalization differs, but the parity of \(P_\mu X^\mu\) and
the symplectic pairing are unchanged.  A mixed target-space gluing applies
the sine and cosine rules component by component.

After splitting the target indices into even-reflection directions \(A\) and
odd-reflection directions \(i\), the corresponding mode expansions are
\begin{equation}
  \begin{aligned}
  X^A&=x_0^A+\sum_{n\geq1}x_n^A\cos(n\sigma),
  \qquad
  P_A=p_{0A}+\sum_{n\geq1}p_{nA}\cos(n\sigma),\\
  X^i&=\sum_{n\geq1}x_n^i\sin(n\sigma),
  \qquad
  P_i=\sum_{n\geq1}p_{ni}\sin(n\sigma).
  \end{aligned}
  \label{eq:zero-expansions}
\end{equation}
The zero Fourier projections of the constraints couple the even-reflection
zero modes to the nonzero modes.  Here \(C_{i,0}\) denotes the average over
the reflected circle.  The \(n=0\) Fourier components are
\begin{equation}\small
  C_{1,0}
  =
  p_{0A}p_0^A
  +
  \frac12\sum_{n\geq1}p_{n\mu}p_n^\mu,
  \qquad
  C_{2,0}=0,
  \qquad
  C_{3,0}
  =
  p_{0A}x_0^A
  +
  \frac12\sum_{n\geq1}p_{n\mu}x_n^\mu.
  \label{eq:zero-constraints}
\end{equation}
Thus the classical zero-mode data must satisfy
\begin{equation}
  C_{1,0}=0,
  \qquad
  C_{3,0}=0.
  \label{eq:zero-mode-constraints}
\end{equation}
With the half-normalized generators used here, the zero modes are
\begin{equation}
  P_0=2\pi C_{1,0},
  \qquad
  A_0=2\pi C_{3,0}.
  \label{eq:zero-density-normalization}
\end{equation}
These equations impose global conditions alongside the local field reduction.
The factors in \(P_0=2\pi C_{1,0}\) and \(A_0=2\pi C_{3,0}\) follow from the
same half-interval normalization used in the mode algebra.

\section{Constraint algebra and Jacobi identities}
\label{app:closure}

\subsection{Bulk algebra on the interval}

The mode construction uses the connection-independent densities
\(\mathcal H=P^2\), \(\mathcal D=P\cdot X'\), and \(\mathcal S=P\cdot X\).
For comparison with the covariant constraint basis, write
\(C_1=\mathcal H\), \(C_2=\mathcal D+W_1\mathcal S\), and
\(C_3=\mathcal S\).  The corresponding brackets retain the connection terms.
For test functions supported in the interior of the interval, direct
functional differentiation from
\begin{equation}
  \{X^\mu(\sigma),P_\nu(\sigma')\}
  =
\delta^\mu{}_\nu\delta(\sigma-\sigma')
  \label{eq:appendix-fundamental-bracket}
\end{equation}
includes every term generated by the connection.  For smeared quantities the
basis change is
\[
  \mathcal D[f]=C_2[f]-C_3[W_1f].
\]
A coefficient comparison with Refs.~\cite{CWI,CWII} therefore uses a common
basis.  The complete distributional \(C_2\) table is
\begin{align}
  \{C_1[f],C_1[g]\}&=0,
  \nonumber\\
  \{C_1[f],C_2[g]\}
  &=
  C_1[fg'-f'g-2W_1fg],
  \nonumber\\
  \{C_1[f],C_3[g]\}
  &=
  -2C_1[fg],
  \nonumber\\
  \{C_2[f],C_2[g]\}
  &=
  C_2[fg'-f'g],
  \nonumber\\
  \{C_2[f],C_3[g]\}
  &=
  C_3[fg'],
  \nonumber\\
  \{C_3[f],C_3[g]\}&=0.
  \label{eq:bulk-bracket-table}
\end{align}
In particular,
\[
  \{\mathcal H[f],C_2[g]\}
  =
  \mathcal H[fg'-f'g]-2\mathcal H[W_1fg].
\]
The familiar connection-independent table is recovered in the
\(\mathcal D\) basis.  In the \(C_2\)--\(C_2\) entry, the two \(W_1'fg\)
terms cancel after integration by parts, and no tertiary constraint appears.
The step-by-step functional differentiation is included in the
canonical-analysis note of the reproducibility package.  For differentiable
interval generators, the \(\mathcal D\)-basis smearings vanish at both
endpoints, removing the integration-by-parts terms.

\subsection{Reflected algebra and Jacobi checks}
\label{app:reflected-algebra}

\emph{Boundary differentiability.}
The table \eqref{eq:bulk-bracket-table} was derived with interior compact
support.  The reflected modes instead use sine and cosine functions.  For
the projected diffeomorphism mode, reflection gives
\begin{equation}
  O_n
  =
  -2\int_0^\pi\dd\sigma\,\sin(n\sigma)\mathcal D(\sigma).
  \label{eq:O-sine-representation}
\end{equation}
Its variation is
\begin{equation}
  \delta O_n
  =
  -2\int_0^\pi\dd\sigma\,\sin(n\sigma)
  \left(\delta P\cdot X'+P\cdot\delta X'\right).
  \label{eq:O-boundary-variation}
\end{equation}
After integration by parts in the second term, the boundary contribution is
\begin{equation}
  -2\bigl[\sin(n\sigma)P\cdot\delta X\bigr]_0^\pi
  =
  0,
  \label{eq:O-boundary-term}
\end{equation}
because \(\sin(n0)=\sin(n\pi)=0\).  The reflected momentum and CW modes have
the representations
\begin{equation}\small
  P_n=2\int_0^\pi\dd\sigma\,\cos(n\sigma)\mathcal H(\sigma),
  \qquad
  A_n=2\int_0^\pi\dd\sigma\,\cos(n\sigma)\mathcal S(\sigma).
  \label{eq:P-A-cosine-representation}
\end{equation}
Let \(s_n=\sin(n\sigma)\) and \(c_m=\cos(m\sigma)\).  Directly,
\begin{equation}\small
  \frac{\delta O_n}{\delta X^\mu}=2(s_nP_\mu)',
  \qquad
  \frac{\delta O_n}{\delta P_\mu}=-2s_nX'{}^\mu,
  \qquad
  \frac{\delta A_m}{\delta X^\mu}=2c_mP_\mu,
  \qquad
  \frac{\delta A_m}{\delta P_\mu}=2c_mX^\mu .
  \label{eq:O-A-functional-derivatives}
\end{equation}
Therefore
\begin{equation}\small
  \begin{aligned}
  \{O_n,A_m\}
  &=
  4\int_0^\pi\dd\sigma\,
  c_m\bigl[(s_nP_\mu)'X^\mu+s_nP_\mu X'{}^\mu\bigr]\\
  &=
  4\int_0^\pi\dd\sigma\,c_m(s_n\mathcal S)'\\
  &=
  4m\int_0^\pi\dd\sigma\,\sin(m\sigma)\sin(n\sigma)\mathcal S\\
  &=
  mA_{n-m}-mA_{n+m}.
  \end{aligned}
  \label{eq:direct-O-A}
\end{equation}
The second integration by parts has the vanishing boundary term
\(4[c_ms_n\mathcal S]_0^\pi\).

The Hamiltonian mode has
\(\delta P_n/\delta X^\mu=0\) and
\(\delta P_n/\delta P_\mu=4c_nP^\mu\), so
\begin{equation}
  \{P_n,A_m\}
  =
  -8\int_0^\pi\dd\sigma\,c_nc_mP^2
  =
  -2(P_{n+m}+P_{n-m}).
  \label{eq:direct-P-A}
\end{equation}
This calculation contains no spatial integration by parts and fixes the
half-interval normalization directly.

The \(P_n\) and \(A_n\) functionals contain no spatial derivatives and
produce no endpoint terms in their variations.  Their Hamiltonian flows are
\begin{equation}\small
  \{X^\mu,O_n\}=-2\sin(n\sigma)X'{}^\mu,
  \quad
  \{P_\mu,O_n\}=-2\partial_\sigma
  \bigl[\sin(n\sigma)P_\mu\bigr],
  \quad
  \{X^\mu,P_n\}=4\cos(n\sigma)P^\mu.
  \label{eq:O-P-boundary-flows}
\end{equation}
and
\begin{equation}
  \{X^\mu,A_n\}=2\cos(n\sigma)X^\mu,
  \qquad
  \{P_\mu,A_n\}=-2\cos(n\sigma)P_\mu.
  \label{eq:A-boundary-flow}
\end{equation}
The sine and cosine factors have the parity required by the chosen gluing.
Hence the reflected modes are differentiable, their flows preserve the
boundary domain, and the projected Poisson algebra closes.

The projected algebra is closed because every commutator reduces to the
same three generator families.  The potentially nontrivial mixed terms are
the reflected O--A and P--A commutators.  For the latter, \(P_n\) and \(A_m\)
are reflection-invariant combinations whose Hamiltonian vector fields are
tangent to the reflected domain, so
Eq.~\eqref{eq:half-circle-bracket} applies.  The projected identities are
\begin{equation}
  P_{n+m}=\frac12(M_{n+m}+M_{-n-m}),
  \qquad
  P_{n-m}=\frac12(M_{n-m}+M_{m-n}).
  \label{eq:appendix-P-identities}
\end{equation}
The circle bracket of the projected generators and the resulting
half-interval restriction are
\begin{equation}\small
  [P_n,A_m]_{\rm circle}=-(P_{n+m}+P_{n-m}),
  \qquad
  [P_n,A_m]_{\rm half}=-2(P_{n+m}+P_{n-m}).
  \label{eq:appendix-pa}
\end{equation}
The projected generators have tangent Hamiltonian vector fields and inherit
the canonical Poisson bracket, so their algebra satisfies Jacobi, including
the zero modes.  The accompanying symbolic checks audit the mode
coefficients and normalization.

\subsection{Local contact central extensions}
\label{app:central-extensions}

The reflected generators obey
\begin{equation}
  O_{-n}=-O_n,\qquad
  P_{-n}=P_n,\qquad
  A_{-n}=A_n.
  \label{eq:central-reflection-parities}
\end{equation}
The parity relations do not by themselves require a diagonal contact form.
We restrict attention to parity-compatible central terms of local
diagonal-contact form, supported at equal positive mode numbers.  Any
intermediate negative label is first reduced with
Eq.~\eqref{eq:central-reflection-parities}.  For example, an occurrence of
\(O_{-1}\) means \(-O_1\) before a contact term is evaluated.  No independent
central coefficient is introduced for a negative label.  We test the three
mixed channels with
\begin{equation}
  [G_i(n),G_j(m)]_{\rm c}
  =
  C_{ij}(n)\delta_{n,m},
  \qquad
  G_i,G_j\in\{O,P,A\},
  \label{eq:central-extension-ansatz}
\end{equation}
where \(n,m\geq1\).  Central terms in an \(OO\), \(PP\), or \(AA\) bracket
vanish by antisymmetry at \(n=m\).
We classify classical candidate central terms within this parity-compatible
local diagonal-contact ansatz, with the zero-mode actions held fixed.
General nonlocal cocycles and quantum BRST anomalies lie outside this
analysis.

Suppose first that the O--P bracket contains a central coefficient
\(C_{OP}(n)\) at equal positive mode numbers.  The zero-mode identities in
Eqs.~\eqref{eq:zero-modes} and~\eqref{eq:zero-brackets} give
\begin{equation}
  [P_n,A_0]=-4P_n,
  \qquad
  [A_0,O_n]=0.
  \label{eq:central-op-zero-brackets}
\end{equation}
The CW zero mode grades the algebra: \(P_n\) has adjoint weight four under
\(A_0\), whereas \(O_n\), \(A_n\), and central constants have weight zero.  A
central constant is therefore incompatible with the O--P and P--A brackets
when this zero-mode action is fixed.
The Jacobi identity for \((O_n,P_n,A_0)\) is
\begin{align}
  0
  &=
  [O_n,[P_n,A_0]]
  +
  [P_n,[A_0,O_n]]
  \nonumber\\
  &\quad
  +
  [A_0,[O_n,P_n]]
  =
  -4[O_n,P_n]+8nP_0
  =
  -4C_{OP}(n).
  \label{eq:central-op-veto}
\end{align}
Hence
\begin{equation}
  C_{OP}(n)=0.
  \label{eq:central-op-zero}
\end{equation}
An O--P central term is allowed in the bosonic BCCA sector of the
homogeneous boundary superconformal algebra \cite{Bagchi:2025BSCCA}.  With
the CW zero-mode action \eqref{eq:central-op-zero-brackets} held fixed,
Eq.~\eqref{eq:central-op-veto} forces this term to vanish within the local
diagonal-contact class considered here.

The same argument excludes a P--A central term.  For a coefficient
\(C_{PA}(n)\) at equal positive modes, the Jacobi identity for
\((A_0,P_n,A_n)\), together with
\begin{equation}
  [A_0,P_n]=4P_n,
  \label{eq:central-pa-zero-bracket}
\end{equation}
gives
\begin{equation}
  \small
  0
  =
  [A_0,[P_n,A_n]]
  +
  [P_n,[A_n,A_0]]
  +
  [A_n,[A_0,P_n]]
  =
  -8(P_{2n}+P_0)+4[A_n,P_n]
  =
  -4C_{PA}(n).
  \label{eq:central-pa-veto}
\end{equation}
The noncentral terms cancel, and the central part is
\(-4C_{PA}(n)\).  Hence
\begin{equation}
  C_{PA}(n)=0.
  \label{eq:central-pa-zero}
\end{equation}

Only the O--A channel remains.  Write
\(b_n=C_{OA}(n)\).  The mixed Jacobi identities for two representative
families of triples, \((O_1,A_1,O_2)\) and \((A_1,O_n,O_{n+1})\), reduce to
\begin{equation}
  b_2=2b_1,
  \qquad
  b_{n+1}=(2n+1)b_1-b_n
  \quad(n\geq2).
  \label{eq:central-oa-recurrence}
\end{equation}
Therefore
\begin{equation}
  b_n=n b_1.
  \label{eq:central-oa-linear}
\end{equation}
This linear term is a coboundary.  Define
\begin{equation}
  \Phi(A_0)=b_1,
  \qquad
  \Phi(O_n)=\Phi(P_n)=\Phi(A_{n\neq0})=0.
  \label{eq:central-oa-functional}
\end{equation}
Using Eq.~\eqref{eq:opa-algebra} and the reflection parity of \(A_n\),
\begin{equation}
  \Phi([O_n,A_m])
  =
  n b_1\delta_{n,m}.
  \label{eq:central-oa-coboundary}
\end{equation}
This relation is stated on the support of Eq.~\eqref{eq:central-extension-ansatz},
namely \(n,m\geq1\) after the negative-label reduction just described.
Thus the O--A representative in Eq.~\eqref{eq:central-oa-linear} belongs to
the trivial cohomology class and can be removed by a parity-compatible
redefinition of the zero mode.

The closed-string result of Ref.~\cite{CWBRS} agrees with this restriction.
The LL and SS central terms are odd polynomials of the parent mode label and
cancel in the reflected \(O\) and \(A\) combinations.  For the LS cocycle of
\href{https://arxiv.org/html/2608.02731v1\#S5.E16}{Eq.~(5.16)} of that
reference, the reflected difference is exactly
\(b_n=\frac12c_{LS}n\) after the half projection and half-interval
normalization.  This is the coboundary in
Eq.~\eqref{eq:central-oa-coboundary}.  Consequently the closed-string
Weyl--BMS classes do not descend as ordinary central extensions of the
minimal reflected O--P--A algebra within the local diagonal-contact class
considered here.

\section{Proof of Proposition~\ref{prop:fixed-defect-orbit-obstruction}}
\label{app:reflection-scope}

\subsection{Detailed endpoint-bivector calculation}

\begin{proof}[Proof of Proposition~\ref{prop:fixed-defect-orbit-obstruction}]
An admissible embedding-constraint flow acts on momentum by
\begin{equation}
  \delta P=(gP)'-hP=gP'+(g'-h)P,
  \qquad
  g(b)=0 .
  \label{eq:constraint-flow-momentum}
\end{equation}
Differentiating and antisymmetrizing, and using \(P\wedge P=0\), gives
\begin{equation}
  \delta(P_T\wedge P_T')
  =
  g(P_T\wedge P_T')'
  +(3g'-2h)(P_T\wedge P_T').
  \label{eq:bivector-Ward}
\end{equation}
At the endpoint,
\begin{equation}
  \delta K_b=(3g'(b)-2h(b))K_b .
  \label{eq:bivector-endpoint-flow}
\end{equation}
Primary generators and \(\mathcal H[f]\) do not change \(P\).  Along a finite
smooth gauge path, \(K_b\) is multiplied by a nonzero exponential factor.
Thus its zero set is invariant.  In the \(C_2\) basis, the endpoint value of
the effective \(h\) is unchanged because \(g(b)=0\).

A reflected representative has even tangent momentum, so
\(P'^A(b)=0\) and \(K_b=0\).  Therefore a nonzero \(K_b\) obstructs a
representative in \(\mathcal R_R\).

For existence, use the orbit in
Eq.~\eqref{eq:orbit-obstruction-solution}, with
\(\rho,q,\kappa>0\) and \(\tau+\tau_0>0\).  Direct substitution gives
\(P^2=P\cdot X=P\cdot X'=0\),
\(\dot X=P^\sharp/\kappa\), and \(\dot P=0\).  The auxiliary equations hold
because \(Y=P^\sharp/\kappa\) and
\(Y^2=Y\cdot X'=X\cdot Y=0\).  At both endpoints
\(X^3=P^3=V^1=0\).  Since \(P'^2=\rho^2\), \(P\) and \(P'\) are independent,
and the same constraint-gradient argument as in the regular reduction proves
functional independence.  Nevertheless,
\begin{equation}
  K_b^{12}
  =
  \rho^2(\cos^2b+\sin^2b)
  =
  \rho^2\neq0 .
  \label{eq:orbit-obstruction-bivector}
\end{equation}
The orbit therefore has no representative in the corresponding reflected
domain.
\end{proof}

\bibliographystyle{JHEP}
\bibliography{reflected_constraint_algebras}

\end{document}